  \providecommand\BibTeX{{%
    \normalfont B\kern-0.5em{\scshape i\kern-0.25em b}\kern-0.8em\TeX}}}
\newtheorem{lemma}{Lemma}
\newtheorem{problem definition}{Problem Definition}
\begin{document}

%%
%% The "title" command has an optional parameter,
%% allowing the author to define a "short title" to be used in page headers.
\title{High-Frequency-aware Hierarchical Contrastive Selective Coding for Representation Learning on Text-attributed Graphs}

\author{Peiyan Zhang}
\affiliation{\institution{Hong Kong University of \\ Science and Technology}\country{Hong Kong}}
\email{pzhangao@cse.ust.hk}

\author{Chaozhuo Li}
\authornote{Chaozhuo Li is the corresponding author}
\affiliation{\institution{Microsoft Research Asia}\city{Beijing}\country{China}}
\email{cli@microsoft.com}

\author{Liying Kang}
\affiliation{\institution{Hong Kong Polytechnic University}\country{Hong Kong}}
\email{lykangc12@gmail.com}

\author{Feiran Huang}
\affiliation{\institution{Jinan University}\country{China}}
\email{huangfr@jnu.edu.cn}

\author{Senzhang Wang}
\affiliation{\institution{Central South University}\country{China}}
\email{szwang@csu.edu.cn}

\author{Xing Xie}
\affiliation{\institution{Microsoft Research Asia}\city{Beijing}\country{China}}
\email{xing.xie@microsoft.com}

\author{Sunghun Kim}
\affiliation{\institution{Hong Kong University of \\ Science and Technology}\country{Hong Kong}}
\email{hunkim@cse.ust.hk}

\renewcommand{\shortauthors}{Peiyan Zhang, et al.}

%%
%% The abstract is a short summary of the work to be presented in the
%% article.
\begin{abstract}
    We investigate node representation learning on text-attributed graphs (TAGs), where nodes are associated with text information. Although recent studies on graph neural networks (GNNs) and pretrained language models (PLMs) have exhibited their power in encoding network and text signals, respectively, less attention has been paid to delicately coupling these two types of models on TAGs. Specifically, existing GNNs rarely model text in each node in a contextualized way; existing PLMs can hardly be applied to characterize graph structures due to their sequence architecture. To address these challenges, we propose HASH-CODE, a \textbf{H}igh-frequency \textbf{A}ware \textbf{S}pectral \textbf{H}ierarchical \textbf{Co}ntrastive Selective Co\textbf{d}ing method that integrates GNNs and PLMs into a unified model. Different from previous “cascaded architectures” that directly add GNN layers upon a PLM, our HASH-CODE relies on five self-supervised optimization objectives to facilitate thorough mutual enhancement between network and text signals in diverse granularities. Moreover, we show that existing contrastive objective learns the low-frequency component of the augmentation graph and propose a high-frequency component (HFC)-aware contrastive learning objective that makes the learned embeddings more distinctive. Extensive experiments on six real-world benchmarks substantiate the efficacy of our proposed approach. In addition, theoretical analysis and item embedding visualization provide insights into our model interoperability.  
\end{abstract}

%%
%% The code below is generated by the tool at http://dl.acm.org/ccs.cfm.
%% Please copy and paste the code instead of the example below.
%%
% \begin{CCSXML}
% <ccs2012>
%  <concept>
%   <concept_id>00000000.0000000.0000000</concept_id>
%   <concept_desc>Do Not Use This Code, Generate the Correct Terms for Your Paper</concept_desc>
%   <concept_significance>500</concept_significance>
%  </concept>
%  <concept>
%   <concept_id>00000000.00000000.00000000</concept_id>
%   <concept_desc>Do Not Use This Code, Generate the Correct Terms for Your Paper</concept_desc>
%   <concept_significance>300</concept_significance>
%  </concept>
%  <concept>
%   <concept_id>00000000.00000000.00000000</concept_id>
%   <concept_desc>Do Not Use This Code, Generate the Correct Terms for Your Paper</concept_desc>
%   <concept_significance>100</concept_significance>
%  </concept>
%  <concept>
%   <concept_id>00000000.00000000.00000000</concept_id>
%   <concept_desc>Do Not Use This Code, Generate the Correct Terms for Your Paper</concept_desc>
%   <concept_significance>100</concept_significance>
%  </concept>
% </ccs2012>
% \end{CCSXML}

% \ccsdesc[500]{Do Not Use This Code~Generate the Correct Terms for Your Paper}
% \ccsdesc[300]{Do Not Use This Code~Generate the Correct Terms for Your Paper}
% \ccsdesc{Do Not Use This Code~Generate the Correct Terms for Your Paper}
% \ccsdesc[100]{Do Not Use This Code~Generate the Correct Terms for Your Paper}

%%
%% Keywords. The author(s) should pick words that accurately describe
%% the work being presented. Separate the keywords with commas.
\keywords{Text Attributed Graph, Graph Neural Networks, Transformer, Contrastive Learning}

%%
%% This command processes the author and affiliation and title
%% information and builds the first part of the formatted document.
\maketitle

\section{Introduction}

Graphs are pervasive in the real world, and it is common for nodes within these graphs to be enriched with textual attributes, thereby giving rise to text-attributed graphs (TAGs)~\citep{zhao2022learning}. For instance, academic graphs~\citep{tang2008arnetminer} incorporate  papers replete with their titles and abstracts, whereas social media networks~\citep{zhang2016geoburst} encompass tweets accompanied by their textual content. Consequently, the pursuit of learning within the realm of TAGs has assumed significant prominence as a research topic spanning various domains, \textit{e.g.,} network analysis~\citep{wang2019heterogeneous}, recommender systems~\citep{zhang2019heterogeneous}, and anomaly detection~\citep{liu2019fine}.

In essence, graph topology and node attributes comprise two integral components of TAGs. 
% Graph topology encapsulates the local structures of nodes by delineating their interconnections within a graph, while node attributes convey the semantics of nodes by endowing them with textual features. 
Consequently, the crux of representation learning on TAGs lies in the amalgamation of graph topology and node attributes. Previous works mainly adopt a cascaded architecture~\citep{jin2021bite,li2021adsgnn,zhang2019shne,zhu2021textgnn} (Figure~\ref{fig:example}(a)), which entails encoding the textual attributes of each node with Pre-trained Language Models (PLMs), subsequently utilizing the PLM embeddings as features to train a Graph Neural Network (GNN) for message propagation~\citep{gururangan2020don,chien2021node,yasunaga2022linkbert}. However, as the modeling of node attributes and graph topology are segregated, this learning paradigm harbors conspicuous limitations. Firstly, the link connecting two nodes is not
utilized when generating their text representations. In fact, linked
nodes can benefit each other regarding text semantics understanding. For example, given a paper on "LDA" and its citation nodes which are related to topic modeling, the "LDA" can be more likely interpreted as "Latent Dirichlet Allocation" rather than "Linear Discriminant Analysis". In addition, this paradigm may yield textual embeddings that are not pertinent to downstream tasks, thereby impeding the model's ability to learn node representations suitable for such tasks. Moreover, given that the formation of the graph's topological structure is intrinsically driven by the node attribute~\citep{zhao2022learning}, this paradigm may adversely affect the comprehension of the graph topology.

\begin{figure}[t]
    \centering
    \includegraphics[width=\linewidth]{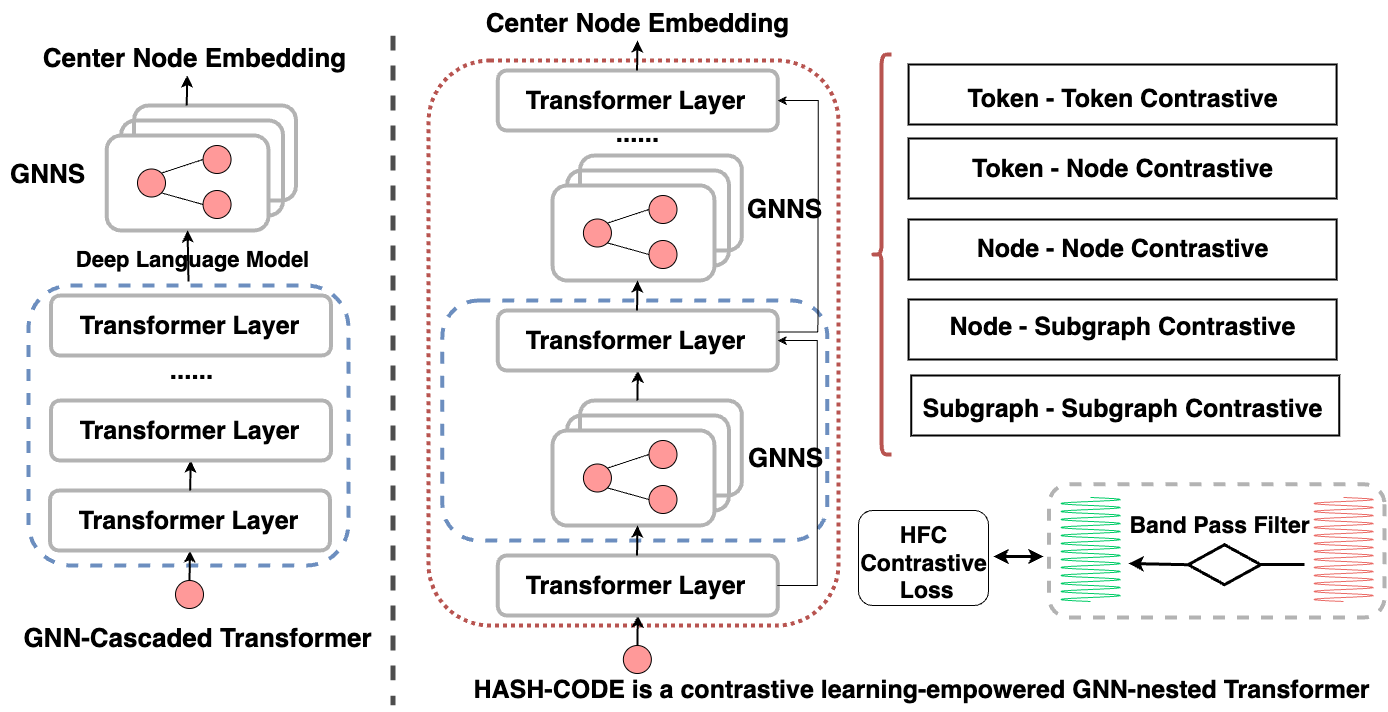}
    \caption{(a) An illustration of GNN-cascaded transformer. (b) An illustration of our proposed contrastive learning-empowered GNN-nested transformer. The red and green twines denote the original graph signals and the mixed LFC and HFC signals from the spectral perspective.}
    \label{fig:example}
    % \vspace{-0.3cm}
\end{figure}

Fortunately, recent efforts have been undertaken~\citep{li2017ppne,yang2021graphformers,bi2021leveraging,pang2022improving,jin2022heterformer,yan2024comprehensive} to co-train GNNs and LMs within a unified learning framework. For example, GraphFormers~\citep{yang2021graphformers} introduces GNN-nested transformers, facilitating the joint encoding of text and node features. Heterformer~\citep{jin2022heterformer} alternately stacks the graph aggregation module and a transformer-based text encoding module into a cohesive model to capture network heterogeneity. Despite the demonstrated efficacy of existing methods, they are encumbered by two primary drawbacks that may undermine the quality of representation learning. Firstly, these methods typically employ supervised training, necessitating a substantial volume of labeled data. However, in numerous scientific domains, labeled data are scarce and expensive to obtain~\citep{hu2019strategies,wang2021self}. 
% For instance, accurately labeling an unknown gene often demands an extensive understanding of molecular biology, presenting a formidable challenge even for seasoned researchers~\citep{hu2019strategies}. 
Secondly, these methods rely exclusively on limited optimization objectives to learn the entire model. When GNNs and LMs are jointly trained, the associated parameters are also learned through the constrained optimization objectives. It has been observed that such an optimization approach fails to capture the fine-grained correlations between textual features and graphic patterns~\citep{yang2021graphformers, zhou2020s3}. Consequently, the importance of learning graph representations in an unsupervised or self-supervised manner is becoming increasingly paramount.

In order to tackle the aforementioned challenges, we draw inspiration from the concept of contrastive learning to enhance representation learning on TAGs. Contrastive learning~\citep{chen2020simple,hassani2020contrastive,he2020momentum,velickovic2019deep} refines representations by drawing positive pairs closer while maintaining a distance between negative pairs. As data sparsity and limited supervision signals constitute the two primary learning obstacles associated with existing co-training methods, contrastive learning appears to offer a promising solution to both issues: it capitalizes on intrinsic data correlations to devise auxiliary training objectives and bolsters data representations with an abundance of self-supervised signals.

% To address the above challenges, we borrow the idea of contrastive learning for improving the representation learning on TAGs. Contrastive learning~\citep{chen2020simple,hassani2020contrastive,he2020momentum,velickovic2019deep} is a newly emerging paradigm, which learns representations by pushing positive pairs closer while keeping negative pairs far apart. As previously discussed, data sparsity and limited supervision signals are the two major learning issues with the existing co-training methods. Fortunately, contrastive learning seems to provide a promising solution to both problems: it utilizes the intrinsic data correlations to devise auxiliary training objectives and enhances the data representations with rich self-supervised signals. 

In practice, representation learning on TAGs with contrastive learning is non-trivial, primarily encountering the following three challenges: (1)~\textit{How to devise a learning framework that capitalizes on the distinctive data properties of TAGs?} The contextual information within TAGs is manifested in a multitude of forms or varying intrinsic characteristics, such as tokens, nodes, or sub-graphs, which inherently exhibit complex hierarchical structures. Moreover, these hierarchies are interdependent and exert influence upon one another. How to capitalize these unique properties of TAGs remains an open question. (2)~\textit{How to design effective contrastive tasks?} To obtain an effective node embedding that fully encapsulates the semantics, relying solely on the hierarchical topological views of TAGs remains insufficient. Within TAGs, graph topological views and textual semantic views possess the capacity to mutually reinforce one another, indicating the importance of exploring the cross-view contrastive mechanism. Moreover, the hierarchies in TAGs can offer valuable guidance in selecting positive pairs with analogous semantics and negative pairs with divergent semantics, an aspect that has received limited attention in existing research~\citep{xu2021self,kim2021self}. (3)~\textit{How to learn distinctive representations?} In developing the contrastive learning framework, we draw inspiration from the recently proposed spectral contrastive learning method~\citep{haochen2021provable}, which outperforms several contrastive baselines with solid theoretical guarantees. However, we demonstrate that, from a spectral perspective, the spectral contrastive loss primarily learns the low-frequency component (LFC) of the graph, significantly attenuating the effects of high-frequency components (HFC). Recent studies suggest that the LFC does not necessarily encompass the most vital information~\citep{bo2021beyond,chen2019drop}, and would ultimately contribute to the over-smoothing problem~\citep{cai2020note,chen2020measuring,li2018deeper,liu2020towards}, causing node representations to converge to similar values and impeding their differentiation. Consequently, more explorations are needed to determine how to incorporate the HFC to learn more discriminative embeddings.

To this end, we present a novel \textbf{H}igh-frequency \textbf{A}ware \textbf{S}pectral \textbf{H}ierarchical \textbf{Co}ntrastive Selective Co\textbf{d}ing framework (\textbf{HASH-CODE}) to enhance TAG representation learning. Building upon a GNN and Transformer architecture~\citep{yang2021graphformers,zhu2021textgnn}, we propose to jointly train the GNN and Transformer with self-supervised signals (Figure~\ref{fig:example}(b) depicts this architecture). The primary innovation lies in the contrastive joint-training stage. Specifically, we devise five self-supervised optimization objectives to capture hierarchical intrinsic data correlations within TAGs. These optimization objectives are developed within a unified framework of contrastive learning. Moreover, we propose a loss that can be succinctly expressed as a contrastive learning objective, accompanied by robust theoretical guarantees. Minimizing this objective results in more distinctive embeddings that strike a balance between LFC and HFC. Consequently, the proposed method is capable of characterizing correlations across varying levels of granularity or between different forms in a general manner. 

Our main contributions are summarized as follows:
\begin{itemize}[leftmargin=*,noitemsep,topsep=0pt]
    \item We propose five self-supervised optimization
objectives to maximize the mutual information of context information in different forms or granularities.
    \item We systematically examine the fundamental limitations of spectral contrastive loss from the perspective of spectral domain. We prove that it learns the LFC and propose an HFC-aware contrastive learning objective that makes the learned embeddings more discriminative.
    \item Extensive experiments conducted on three million-scale text-attributed graph datasets demonstrate the effectiveness of our proposed approach.
\end{itemize}

% The rest of this article is structured in the following manner: Section~\ref{sec:related} delineates the current methodologies employed for representation learning on TAGs, as well as recent advancements in the realm of contrastive learning. Section~\ref{sec:bg} furnishes the necessary background information and a comprehensive theoretical examination of the contrastive learning loss. Subsequently, the proposed approach is expounded upon in Section~\ref{sec:method}. In Section~\ref{sec:exp}, comparative experiments and ablation analyses are executed to substantiate the efficacy of our method. Lastly, Section~\ref{sec:con} encapsulates the conclusions drawn from this research.

% The rest of this article is organized as follows: Section~\ref{sec:related} introduces the existing methods for the representation learning on TAGs and recent advances in contrastive learning. In Section~\ref{sec:bg}, We provide the background and our theoretical analysis on the contrastive learning loss. Then the proposed method is presented in Section~\ref{sec:method}. In Section~\ref{sec:exp}, we conduct comparison experiments and ablation analysis to verify our method. Finally, Section~\ref{sec:con} provides our conclusions.
% \vspace{-0.6cm}

\section{Related Work}
\label{sec:related}
\subsection{Representation Learning on TAGs}

Representation learning on TAGs constitutes a significant research area across multiple domains, including natural language processing~\cite{wang2016linked,wang2016text,yang2022reinforcement,jin2022towards}, information retrieval~\cite{wang2019improving,xu2019deep,jin2023predicting,jin2022code}, and graph learning~\cite{yang2015network,yasunaga2017graph,long2021hgk,long2020graph}. In order to attain high-quality representations for TAGs, it is imperative to concurrently harness techniques from both natural language understanding and graph representation. The recent advancements in pretrained language models (PLM) and graph neural networks (GNN) have catalyzed the progression of pertinent methodologies.

\textbf{Seperated Training.} 
A number of recent efforts strive to amalgamate GNNs and LMs, thereby capitalizing on the strengths inherent in both models. The majority of prior investigations on TAGs employ a "cascaded architecture"~\citep{jin2021bite,li2021adsgnn,zhang2019shne,zhu2021textgnn}, in which the text information of each node is initially encoded through transformers, followed by the aggregation of node representations via GNNs. Nevertheless, these PLM embeddings remain non-trainable during the GNN training phase. Consequently, the model performance is adversely impacted by the semantic modeling process, which bears no relevance to the task and topology at hand.

% Some recent endeavors aim at the integration of GNNs and LMs, which enables one to benefit from the strengths of both models. Most previous studies on TAGs adopt a “cascaded architecture"~\citep{jin2021bite,li2021adsgnn,zhang2019shne,zhu2021textgnn}, where the text information of each node is first encoded via transformers, then the node representations are aggregated via GNNs. However, these PLM embeddings are still non-trainable during the GNN training phase. Therefore, the model performance suffers from the semantic modeling process that is irrelevant to the task and topology. 

\textbf{Co-training.} 
In an attempt to surmount these challenges, concerted efforts have been directed towards the co-training of GNNs and PLMs within a unified learning framework. GraphFormers~\cite{yang2021graphformers} presents GNN-nested transformers, facilitating the concurrent encoding of text and node features. Heterformer~\cite{jin2022heterformer} alternates between stacking the graph aggregation module and a transformer-based text encoding module within a unified model, thereby capturing network heterogeneity. However, these approaches solely depend on a single optimization objective for learning the entire model, which considerably constrains their capacity to discern the fine-grained correlations between textual and graphical patterns. 

\subsection{Contrastive Learning}
\textbf{Empirical Works on Contrastive learning.} 
Contrastive methods~\citep{chen2020simple,chen2020improved,he2020momentum} derive representations from disparate views or augmentations of inputs and minimize the InfoNCE loss~\citep{oord2018representation}, wherein two views of identical data are drawn together, while views from distinct data are repelled. The acquired representation can be utilized to address a wide array of downstream tasks with exceptional performance. In the context of node representation learning on graphs, DGI~\cite{velickovic2019deep} constructs local patches and global summaries as positive pairs. GMI~\cite{peng2020graph} is designed to establish a contrast between the central node and its local patch, derived from both node features and topological structure. MVGRL~\cite{hassani2020contrastive} employs contrast across views and explores composition between varying views.

% Contrastive methods~\citep{chen2020simple,chen2020improved,he2020momentum} learn representations from different views or augmentations of inputs and minimize the InfoNCE loss~\citep{oord2018representation}, where two views of the same data are attracted while views from different data are repulsed. The learned representation can be applied to address a broad spectrum of downstream tasks with high performance. For the node representation learning on graphs, DGI~\cite{velickovic2019deep} builds local patches and global summary as positive pairs. GMI~\cite{peng2020graph} is proposed to contrast between the center node and its local patch from node features and topological structure. MVGRL~\cite{hassani2020contrastive} employs contrast across views and experiments composition between different views. 

\textbf{Theoretical works on Contrastive Learning.}
The exceptional performance exhibited by contrastive learning has spurred a series of theoretical investigations into the contrastive loss. The majority of these studies treat the model class as a black box, with notable exceptions being the work of~\cite{lee2021predicting}, which scrutinizes the learned representation with linear models, and the research conducted by~\cite{tian2022deep} and~\cite{wen2021toward}, which examine the training dynamics of contrastive learning for linear and 2-layer ReLU networks. Most relevant to our research is the study by~\cite{saunshi2022understanding}, which adopts a spectral graph perspective to analyze contrastive learning methods and introduces the spectral contrastive loss. We ascertain that the spectral contrastive loss solely learns the LFC of the graph.

% The remarkable performance of contrastive learning has stimulated a series of theoretical works that investigate the contrastive loss, most of which regard the model class as a black box except for the work of Lee et al. (2020) which examines the learned representation with linear models, and the works of Tian (2022) and Wen and Li (2021) which study the training dynamics of contrastive learning for linear and 2-layer ReLU networks. Most pertinent to our work is Saunshi et al. (2022) where a spectral graph perspective is adopted to analyze the contrastive learning methods and proposes the spectral contrastive loss. We discover that the spectral contrastive loss only learns the LFC of the graph.
% A spectral graph point of view is also taken in~\citep{haochen2021provable} to analyze contrastive learning methods and proposes the spectral contrastive loss. However, spectral contrastive loss only learns the LFC of the graph.

Different from the existing works, our research represents the first attempt to contemplate the correlations inherent within the contextual information as self-supervised signals in TAGs. We endeavor to maximize the mutual information among the views of the token, node, and subgraph, which encompass varying levels of granularity within the contextual information. Our HFC-aware loss facilitates the learning of more discriminative data representations, thereby enhancing the performance of downstream tasks.

% Different from the above approaches, our work is the first to consider the correlations within the contextual information as the self-supervised signals in TAGs. We maximize the mutual information among the views of the token, node, and subgraph, which are in different levels of granularity of the contextual information. Our HFC-aware loss can learn more discriminative data representations that improve downstream tasks' performance.

% Here we mainly focus on reviewing the graph-related contrastive learning methods. Specifically, DGI~\cite{velickovic2019deep} builds local patches and global summary as positive pairs, and utilizes Infomax~\cite{linsker1988self} theory to contrast. Along this line, GMI~\cite{peng2020graph} is proposed to contrast between the center node and its local patch from node features and topological structure. MVGRL~\cite{hassani2020contrastive} employs contrast across views and experiments composition between different views. 
\section{Preliminaries}
\label{sec:bg}
In this section, we first give the definition of the text-attributed graphs (TAGs) and formulate the node representation learning problem on TAGs. Then, we introduce our proposed HFC-aware spectral contrastive loss.

% In this section, we first give the definition of the text -attributed graphs (TAGs) and formulate the node representation learning problem on TAGs. Next, we revisit the spectral clustering and spectral contrastive loss. Finally, we introduce our proposed HFC-aware loss.

\subsection{Definition (Text-attributed Graphs)}
% \textbf{Definition (Text -attributed Graphs).} 
A text-attributed graph is defined as $\mathcal{G}=(\mathcal{V},\mathcal{E})$, where $\mathcal{V}=\{v_{1},...,v_{N}\}$ and $\mathcal{E}$ denote the set of nodes and edges, respectively. Let $A\in \mathbb{R}^{N\times N}$ be the adjacency matrix of the graph such that $A_{i,j}=1$ if $v_{j}\in \mathcal{N}(v_{i})$, otherwise $A_{i,j}=0$. Here $\mathcal{N}(.)$ denotes the one-hop neighbor set of a node. Besides, each node $v_{i}$ is associated with text information. 

% $\mathcal{V}$, $\mathcal{E}$, $\mathcal{X}$ represent the sets of nodes, edges and textual attributes, respectively. Each node $v\in \mathcal{V}$ is associated with text information $x\in \mathcal{X}$. 

% \begin{table}[t]
%     \centering
%     \caption{Main Notations Used in This Article.}
%     {
%     \begin{tabular}{cc}
%         \toprule
%            Notation      & Description \\
%         \midrule
%         $\mathcal{G}=(\mathcal{V},\mathcal{E})$  & 13,647,591\\
%         \#Items & 5,643,688\\
%         \#N   & 4.71 \\
%         \#Train  & 22,146,934 \\
%         \#Valid  & 30,000 \\
%         \#Test   & 306,742 \\
%         \bottomrule
%     \end{tabular}}
%     \label{tab:notation}
%     % \vspace{-0.4cm}
% \end{table}

\subsection{Problem Statement}
Given a textual attibuted graph $\mathcal{G}=(\mathcal{V},\mathcal{E})$, the task is to build a model $f_{\theta}: \mathcal{V}\rightarrow \mathbb{R}^{K}$ with parameters $\theta$ to learn the node embedding matrix $F\in \mathbb{R}^{N\times K}$, taking network structures and text semantics into consideration, where $K$ denotes the number of feature channels. The learned embedding matrix $F$ can be further utilized in downstream tasks, \textit{e.g.,} link prediction, node classification, \textit{etc.}

\subsection{HFC-aware Spectral Contrastive Loss}

\par An important technique in our approach is the high-frequency aware spectral contrastive loss. It is developed based on the analysis of the conventional spectral contrastive loss~\citep{haochen2021provable}. Given a node $v$, the conventional spectral contrastive loss is defined as:
\begin{equation}
\begin{aligned}
    \mathcal{L}_{Spectral}(v,v^{+},v^{-}) &= -2\cdot \mathbb{E}_{v,v^{+}}[f_{\theta}(v)^{T}f_{\theta}(v^{+})]\\
    &\ +\mathbb{E}_{v,v^{-}}[(f_{\theta}(v)^{T}f_{\theta}(v^{-}))^{2}],
\end{aligned}
\end{equation}
where $(v,v^{+})$ is a pair of positive views of node $v$, $(v,v^{-})$ is a pair of negative views, and $f_{\theta}$ is a parameterized function from the node to $\mathbb{R}^{K}$. Minimizing $\mathcal{L}_{Spectral}$ is equivalent to spectral clustering on the population view graph~\citep{haochen2021provable}, where the top smallest eigenvectors of the Laplacian matrix are preserved as the columns of the final embedding matrix $F$. 

\par In Appendix~\ref{sec:spectral}, we demonstrate that, from a spectral perspective, $\mathcal{L}_{Spectral}$ primarily learns the low-frequency component (LFC) of the graph, significantly attenuating the effects of high-frequency components (HFC). Recent studies suggest that the LFC does not necessarily encompass the most vital information~\citep{bo2021beyond,chen2019drop}, and would ultimately contribute to the over-smoothing problem~\citep{cai2020note,chen2020measuring,li2018deeper,liu2020towards}.

As an alternative of such low-pass filter, to introduce HFC, we propose our HFC-aware spectral contrastive loss as follows: 
\begin{equation}
\label{eq:hfc}
\begin{aligned}
    \mathcal{L}_{HFC}(v,v^{+},v^{-}) &= -2\alpha\cdot \mathbb{E}_{v,v^{+}}[f_{\theta}(v)^{T}f_{\theta}(v^{+})]\\
    &\ +\mathbb{E}_{v,v^{-}}[(f_{\theta}(v)^{T}f_{\theta}(v^{-}))^{2}],
    \end{aligned}
\end{equation} 
where $\alpha$ is used to control the rate of HFC within the graph.

Upon initial examination, one might observe that our $\mathcal{L}_{HFC}$ formulation closely aligns with $\mathcal{L}_{Spectral}$. Remarkably, the primary distinction lies in the introduction of the parameter $\alpha$. However, this is not a mere trivial addition; it emerges from intricate mathematical deliberation and is surprisingly consistent with $\mathcal{L}_{Spectral}$ that offers a nuanced control of the HFC rate within the graph. Minimizing our $\mathcal{L}_{HFC}$ results in more distinctive embeddings that strike a balance between LFC and HFC. Please kindly refer to Appendix~\ref{sec:spectral} for detailed discussions and proof.

\section{Methodology}
\label{sec:method}
\subsection{Overview}

\begin{figure*}
    \centering
    \includegraphics[width=0.9\textwidth]{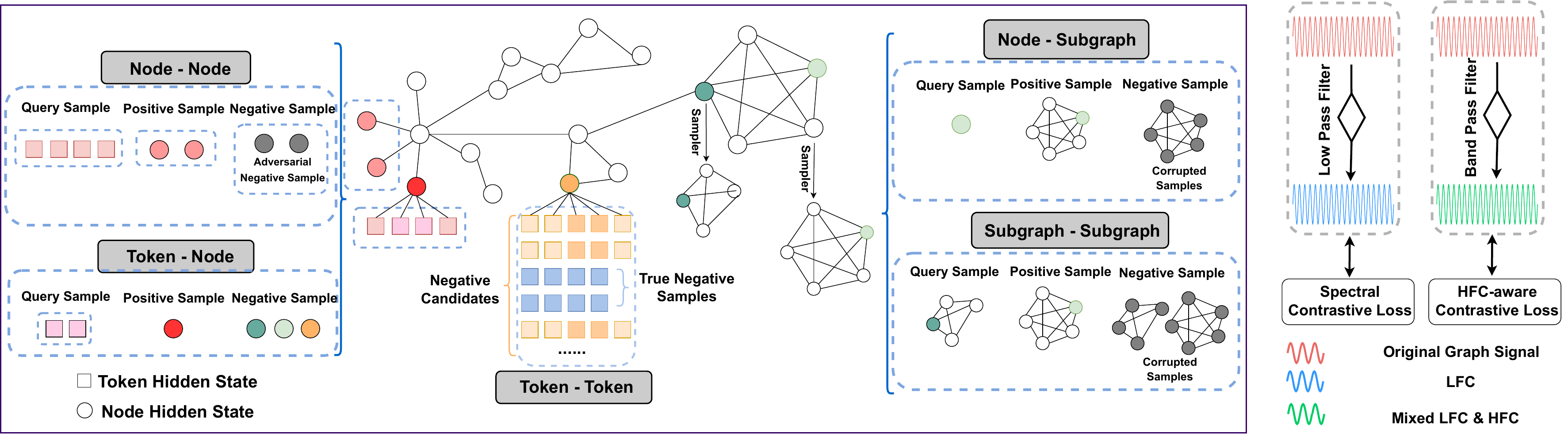}
    \caption{The overall architecture of HASH-CODE. 
    % We take the unidirectional-simplified GraphFormers trained with the two-stage progressive learning as our base model. 
    With GraphFormers as our base model, we incorporate five self-supervised learning objectives based on the HFC-aware contrastive loss to capture the text-graph correlations in different granularities. Spectral contrastive loss learns the LFC while our HFC-aware loss achieves the balance between HFC and LFC.}
    \label{fig:main}
\end{figure*}

Existing studies~\cite{jin2021bite,li2021adsgnn,zhang2019shne,zhu2021textgnn} mainly emphasize the effect of sequential and graphic characteristics using the supervised optimization objective alone.
Inspired by recent progress with contrastive learning~\cite{chen2020simple,he2020momentum}, we take a different perspective to characterize the data correlations by contrasting different views of the raw data.

The basic idea of our approach is to incorporate several elaborately designed self-supervised learning objectives for enhancing the original GNN and PLM. To develop such objectives, we leverage effective correlation signals reflected in the intrinsic characteristics of the input. As shown in Figure~\ref{fig:main}, for our task, we consider the information in different
levels of granularity, including token, node and sub-graph, which are considered as different views of
the input. By capturing the multi-view correlation, we unify these self-supervised learning objectives with the typical joint learning training scheme in language modeling and graph mining~\citep{yang2021graphformers}.

% The overview of our proposed method is presented in . We take the unidirectional-simplified GraphFormers~\citep{yang2021graphformers} trained with the two-stage progressive learning as our base model. In the following sections, we will
% describe how we utilize the correlation signals among tokens, nodes and sub-graphs to enhance the data representations based on our proposed HFC method. Finally, we present the discussions on our approach.

\subsection{Hierarchical Contrastive Learning with TAGs}
TAGs naturally possess 3 levels in the hierarchy: token-level, node-level and subgraph-level.
Based on the above GNN and PLM model, we further incorporate additional self-supervised signals with contrastive learning to enhance the representations of input data. We adopt a joint-training way to construct
different loss functions based on the multi-view correlation.

\subsubsection{Intra-hierarchy contrastive learning}\

\noindent \textbf{Modeling Token-level Correlations.}  We first begin with modeling the bidirectional information in the token sequence. Inspired by the masked language model like BERT~\citep{devlin2018bert}, we propose to use the contrastive learning framework to design a task that maximizes the mutual information between the masked sequence representation and its contextual representation vector. Specifically, for a node $v$, given its textual attribute sequence $x_{v} = \{x_{v,1}, x_{v,2},..., x_{v,T} \}$, we consider $x_{v,i:j}$ and $\hat{x}_{v,i:j}$ as a positive pair, where $x_{v,i:j}$ is an \textit{n}-grams spanning from i to j and $\hat{x}_{v,i:j}$ is the corresponding sequence masked at position i to j. We may omit the subscript $v$ for notation simplification when it is not important to differentiate the affiliation between node and textual sequence.

% The gist of token-token contrastive learning is to embed similar tokens nearby in the latent space while embedding those dissimilar ones far apart. However, the definition of dissimilar (\textit{i.e.,} negative) token pairs is non-trivial. Previous methods usually derive negative samples by sampling uniformly over the dataset~\cite{kong2019mutual}. However, they cannot guarantee that the produced negative samples own exactly distinct semantics relative to the query sample. Such a defect hampers token-token contrastive learning, in which those semantically relevant positive candidates could be wrongly expelled from the query sample in the latent space, and the semantic structure is thus broken to some extent. To overcome this drawback, we aim to select more precise negative samples that own truly irrelevant semantics.

For a specific query \textit{n}-gram $x_{i:j}$, instead of contrasting it indiscriminately with all negative candidates $\mathcal{N}$ in a batch~\cite{zhao2023beyond,wang2022adaptive,li2019adversarial,kong2019mutual}, we select truly negative samples for contrasting based on the supervision signals provided by the hierarchical structure in TAGs, as shown in Figure~\ref{fig:tt}. Intuitively, we would like to eliminate those candidates sharing highly similar semantics with the query, while keeping the ones that are less semantically relevant to the query. To achieve this goal, we first define a similarity measure between an \textit{n}-gram and a node. Inspired by~\cite{li2020prototypical}, for a node $v$, we define the semantic similarity between \textit{n}-gram's hidden state $h_{x_{i:j}}$ and this node's hidden state $h_{v}$ using a node-specific dot product:
\begin{equation}
    s(h_{x_{i:j}},h_{v}) = \frac{h_{x_{i:j}}\cdot h_{v}}{\tau_{h_{v}}},
    \tau_{h_{v}} = \frac{\Sigma_{h_{x_{i}}\in H_{v}}||h_{x_{i}}-h_{v}||_{2}}{|H_{v}|log(|H_{v}|+\epsilon)}\nonumber,
\end{equation}
where $h_{x_{i}}$ is the hidden representation of the token $x_{i}$, $H_{v}$ consists of the hidden representations of the tokens assigned to node $v$, and $\epsilon$ is a smooth parameter balancing the scale of temperature $\tau_{h_{v}}$ among different nodes.

% *****************************************
\begin{figure}[t]
    \centering
    \includegraphics[width=.8\linewidth]{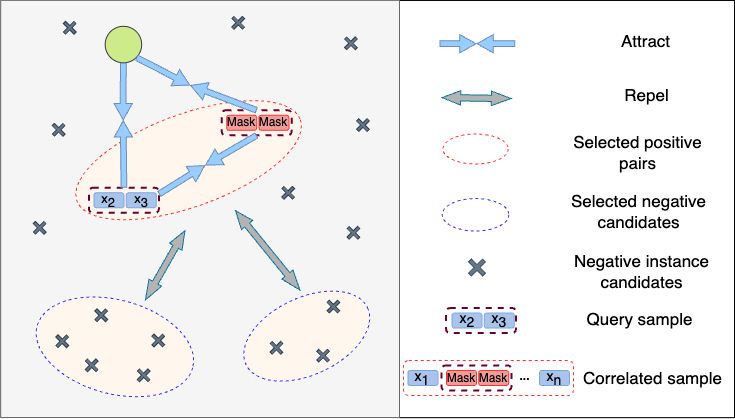}
    \caption{Token-level contrastive selective coding.}
    \label{fig:tt}
    % \vspace{-0.3cm}
\end{figure}
% ******************************************

On such a basis, we conduct negative sampling selection considering both the token and node hierarchies. Given the query \textit{n}-gram $x_{i:j}$, we denote its corresponding node $v$'s representation as $h_{v}$. For a negative candidate, we are more likely to select it if its similarity with $h_{v}$ is less prominent compared with other negative candidates' similarities with $h_{v}$. Based on such an intuition, the least dissimilar negative samples $\mathcal{N}_{select}(h_{x_{i:j}})$ are produced for the specific query.

By using these refined negative samples, we define the objective function of token-level contrastive (TC) loss as below:
\begin{equation}
    % \mathcal{L}_{TC}=E_{p(x_{i:j},\hat{x}_{i:j})}[L_{contrast}(g_{\omega}(x_{i:j}),g_{\omega}(\hat{x}_{i:j}),N_{select}(z),\tau)],
    \mathcal{L}_{TC}=\frac{1}{M}\Sigma_{m=1}^{M}\mathcal{L}_{HFC}(x_{m,i:j},\hat{x}_{m,i:j},\mathcal{N}_{select}(h_{x_{m,i:j}})),
\end{equation}
where $M$ is the size of the representation set and $\mathcal{L}_{HFC}$ is our proposed HFC-aware spectral contrastive loss.
% where $g_{\omega}$ is our base PLM.

% Therefore, we replace the spectral contrastive loss with our proposed HFC-aware contrastive loss to incorporate more HFC for learning discriminative representations. The objective function of token-level contrastive (TC) loss becomes:
% \begin{equation}
%     % \mathcal{L}_{TC}=E_{p(x_{i:j},\hat{x}_{i:j})}[L_{contrast}(g_{\omega}(x_{i:j}),g_{\omega}(\hat{x}_{i:j}),N_{select}(z),\tau)],
%     \mathcal{L}_{TC}=\frac{1}{M}\Sigma_{m=1}^{M}\mathcal{L}_{HFC}(x_{m,i:j},\hat{x}_{m,i:j},N_{select}(z)),
% \end{equation}
% where $M$ is the size of the representation set and $\mathcal{L}_{HFC}$ is the HFC-aware contrastive loss. Note that in the follow-up contrastive objective design, we use the proposed HFC-aware contrastive loss as the underlying contrastive loss.

% ****************************************
\begin{figure}[h]
    \centering
    \includegraphics[width=.8\linewidth]{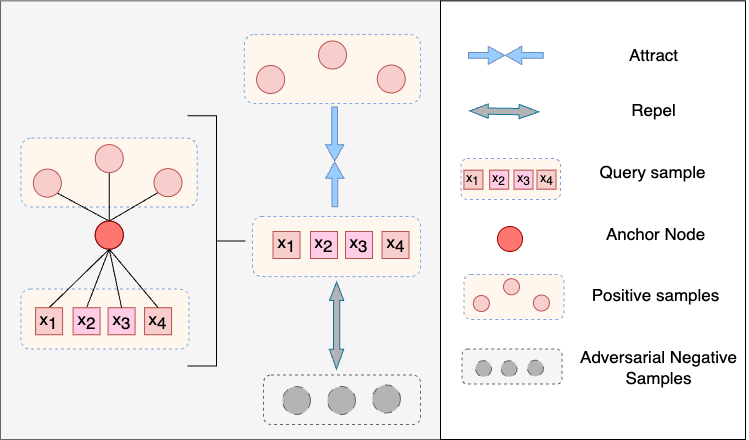}
    \caption{Modeling node-level correlations.}
    \label{fig:nn}
    % \vspace{-0.3cm}
\end{figure}
% *****************************************

\noindent \textbf{Modeling Node-level Correlations.}
Investigating the cross-view contrastive mechanism is especially important for node representation learning~\citep{wang2021self}. As mentioned before, nodes in TAGs possess textual attributes that can indicate semantic relationships in the network and serve as complementary to structural patterns. As shown in Figure~\ref{fig:nn}, given a node $v$, we treat its textual attribute sequence $x_{v}$ and its direct connected neighbors $u, \textnormal{for}~u\in N_{v}$ as two different views.

The negative selective encoding strategy in token-level correlation modeling tends to select less challenging negative samples, reducing their contribution over time. Inspired by~\citep{xia2022progcl}, we adopt the ProGCL~\citep{xia2022progcl} method to adversarially reweight and generate harder negative samples $\widetilde{v}$ using the mixup operation~\citep{zhang2017mixup}. Therefore, we minimize the following Node-level Contrastive (NC) loss:

% The negative selective encoding strategy used in token-level correlation modeling may select those easy negative samples that contribute less and less during the training process. 
% Inspired by~\citep{xia2022progcl}, we propose to adversarially generate the negative samples $\widetilde{v}$ in the node-level contrastive learning process. Specifically, we adopt ProGCL~\citep{xia2022progcl} method to reweight the negative node samples and performing mixup operation~\citep{zhang2017mixup} to generate hard negative samples $\widetilde{v}$. Therefore, we minimize the following Node-level Contrastive (NC) loss:
\begin{equation}
    \mathcal{L}_{NC}=\frac{1}{M}\Sigma_{m=1}^{M}\mathcal{L}_{HFC}(x_{m,v},N_{m,v},\widetilde{v_{m}})
\end{equation}

\noindent \textbf{Modeling Subgraph-level Correlations.}
Having analyzed the correlations between a node’s local neighborhood and its textual attributes, we extend our investigation to encompass the correlations among subgraphs. This approach facilitates the representation of both local and higher-order structures associated with the nodes. It is reasonable that nodes are more strongly correlated with their immediate neighborhoods than with distant nodes, which exert minimal influence on them. Consequently, local communities are likely to emerge within the graph. Therefore, to facilitate our analysis, we select a series of subgraphs that include regional neighborhoods from the original graph to serve as our training data.

% Having modeled correlations between a node's local neighborhood and its textual features, we further consider modeling the correlations between subgraphs to cover both of the local and high-order structures of the nodes. Intuitively, nodes and their regional neighborhoods
% are more correlated while long-distance nodes hardly influence them. Therefore, local communities may form with the graph. This assumption is more reasonable as the size of graphs increases. Therefore, we sample a series of subgraphs including regional neighborhoods from the original graph as training data.

The paramount challenge currently lies in sampling a context subgraph that can furnish adequate structural information essential for the derivation of a high-quality representation of the central node. In this context, we adopt the subgraph sampling methodology based on the personalized PageRank algorithm (PPR)~\citep{jeh2003scaling} as introduced in~\citep{zhang2020graph,jiao2020sub}. Given the variability in the significance of different neighbors, for a specific node $i$, the subgraph sampler $S$ initially computes the importance scores of neighboring nodes utilizing PPR. Considering the relational data among all nodes represented by an adjacency matrix $A\in \mathbb{R}^{N\times N}$, the resulting matrix $S$ of importance scores is designated as 

% The most critical issue now is to sample a context subgraph, which can provide sufficient structure information for learning a high-quality representation for the central node. Here we follow the subgraph sampling based on personalized PageRank  algorithm (PPR)~\citep{jeh2003scaling} as introduced in~\citep{zhang2020graph,jiao2020sub}. Considering the
% importance of different neighbors varies, for a specific node $i$, the subgraph sampler $S$ first measures the importance scores of other neighbor nodes by PPR. Given the relational information between all nodes in the form of an adjacency matrix, $A\in \mathbb{R}^{N\times N}$, the importance score
% matrix $S$ can be denoted as
\begin{equation}
    S=\alpha\cdot(I-(1-\alpha)\cdot\overline{A})\nonumber,
\end{equation}
where $I$ represents the identity matrix and $\alpha$ is a parameter within the range $[0,1]$. The term $\overline{A}=AD^{-1}$ is the column-normalized adjacency matrix, where $D$ is the corresponding diagonal matrix with entries $D(i,i)=\Sigma_{j}A(i,j)$ along its diagonal. The vector $S(i,:)$ enumerates the importance scores for node $i$.

% where $I$ is the identity matrix and $\alpha\in [0,1]$ is a parameter that is always set as 0.15. $\overline{A}=AD^{-1}$ denotes the colum-normalized adjacency matrix, where $D$ denotes the corresponding diagonal matrix with $D(i,i)=\Sigma_{j}A(i,j)$ on its diagonal. $S(i,:)$ is the importance scores vector for node $i$, indicating its correlation with other nodes.

% It is noted that the importance score matrix S can be
% precomputed before model training starts. And we implement node-wise PPR to calculate importance scores to reduce computation memory, which makes our method more suitable to work on large-scale graphs.

For a specific node $i$, the subgraph sampler $S$ selects the top-k most significant neighbors to form the subgraph $G_{i}$. The indices of the selected nodes are
% For a specific node $i$, the subgraph sampler $S$ chooses top-k important neighbors to constitute the subgraph $G_{i}$. The index of chosen nodes can be denoted as
\begin{equation}
    idx = top\_rank(S(i,:), k)\nonumber,
\end{equation}
where $top\_rank$ is the function that returns the indices corresponding to the top-k values, where k specifies the size of the context graphs.

The subgraph sampler $S$ processes the original graph along with the node index to derive the context subgraph $G_{i}$ for node $i$. The adjacency matrix $A_{i}$ and feature matrix $X_{i}$ of this subgraph are defined as follows:
\begin{equation}
    A_{i}=A_{idx,idx,} X_{i}=X_{idx,:,}\nonumber
\end{equation}
where $.idx$ denotes an indexing operation. $A_{idx,idx}$ refers to the adjacency matrix, row-wise and column-wise indexed to correspond to the induced subgraph.  $X_{idx,:}$ is the feature matrix indexed row-wise. 

\noindent \textbf{Encoding subgraph.}
Upon acquiring the context subgraph $G_{i}=(A_{i},X_{i})$ of a central node $i$, the encoder $\mathcal{E}:\mathbb{R}^{N\times N}\times \mathbb{R}^{N\times F}\rightarrow \mathbb{R}^{N\times F}$ encodes it to derive the latent representations matrix $H_{i}$, which is denoted as
\begin{equation}
    H_{i} = \mathcal{E}(A_{i},X_{i})\nonumber
\end{equation}

The subgraph-level representation $s_{i}$ is summarized using a readout function, $\mathcal{R}:\mathbb{R}^{N\times F}\rightarrow\mathbb{R}^{F}$:
\begin{equation}
    s_{i}=\mathcal{R}(H_{i})\nonumber
\end{equation}.

% ******************************************
\begin{figure}[t]
    \centering
    \includegraphics[width=.8\linewidth]{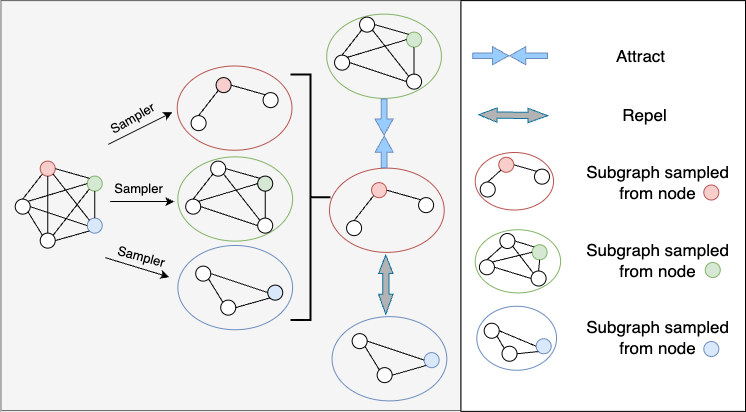}
    \caption{Modeling subgraph-level correlations.}
    \label{fig:ss}
    % \vspace{-0.3cm}
\end{figure}
% ******************************************

So far, the representations of subgraphs have been produced. As shown in Figure~\ref{fig:ss}, to model the correlations in subgraph level, we treat two subgraphs $s_{i}$ and $\hat{s}_{i}$ that sampled from the node $h_{i}$ and its most important neighbor node $\hat{h}_{i}$ respectively as positive pairs while the rest of subgraphs $\widetilde{s}$ are negative pairs. We minimize the following Subgraph-level Contrastive (SC) loss:
\begin{equation}
    \mathcal{L}_{SC} = \frac{1}{M}\Sigma_{m=1}^{M}\mathcal{L}_{HFC}(s_{m}, \hat{s}_{m},\widetilde{s_{m}})
\end{equation}

% ***********************************
% \begin{figure}[t]
%     \centering
%     \includegraphics[width=.8\linewidth]{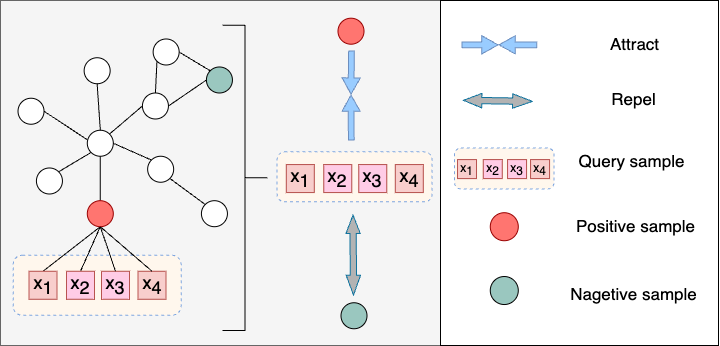}
%     \caption{Modeling token-node correlations.}
%     \label{fig:tn}
%     % \vspace{-0.3cm}
% \end{figure}
% **************************************
\subsubsection{Inter-hierarchy contrastive learning}\

\noindent Having modeled the intra-hierarchy correlations, we further consider modeling the intra-hierarchy correlations as different hierarchies are dependent and will influence each other.

\noindent \textbf{Modeling Token-Node Correlations.}
To model the token-node correlation, our intuition is to train the language model to refine the understanding of the text by GNN produced embeddings. Therefore, the language model is pushed to learn fine-grained task-aware context information. Specifically, 
% as shown in Figure~\ref{fig:tn}, 
given a sequence $x_{v} = \{x_{v,1}, x_{v,2},..., x_{v,T} \}$, we consider $x_{v}$ and its corresponding node representation $h_{v}$ as a positive pair. On the other hand, for a set of node representations, we employ a function, $\mathcal{P}$, to corrupt them to generate negative samples, denoted as
\begin{equation}
    \{\widetilde{h_{1}}, \widetilde{h_{2}},...,\widetilde{h_{M}}\} = \mathcal{P}\{h_{1}, h_{2},...,h_{M}\}\nonumber,
\end{equation}
where $M$ is the size of the representation set. $\mathcal{P}$ is the random shuffle function in our experiment. This corruption strategy determines the differentiation of tokens with different context nodes, which is crucial for some downstream tasks, such as node classification.
We develop the following Token-Node Contrastive (TNC) loss:
\begin{equation}
    % \mathcal{L}_{TNC} = \frac{1}{M}\Sigma_{k=1}^{M}\mathbb{E}_{}[\mathcal{L}_{HFC}(x_{i:j},h_{k},\widetilde{h_{k}})]
    \mathcal{L}_{TNC} = \frac{1}{M}\Sigma_{m=1}^{M}\mathcal{L}_{HFC}(x_{m,v},h_{m,v},\mathcal{P}\{h_{1}, h_{2},...,h_{M}\})
\end{equation}

% % ****************************************
% \begin{figure}[h]
%     \centering
%     \includegraphics[width=.8\linewidth]{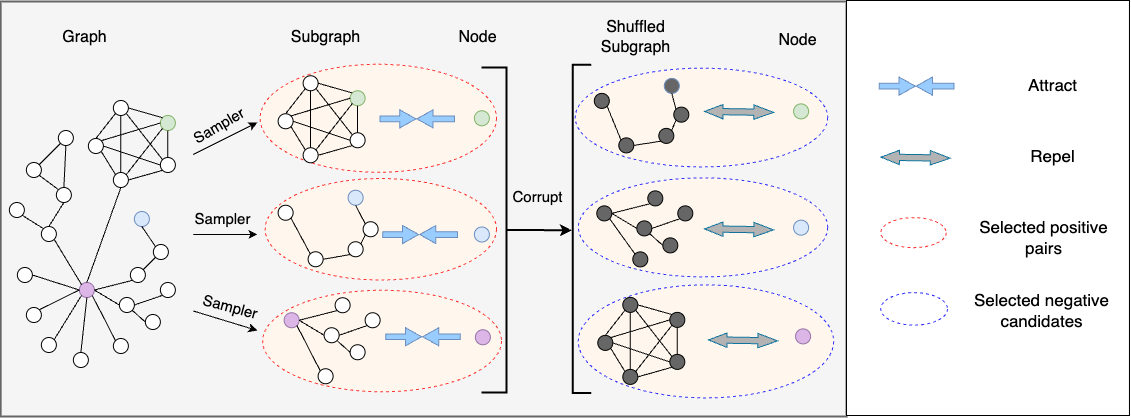}
%     \caption{Modeling node-subgraph correlations.}
%     \label{fig:ns}
%     \vspace{-0.3cm}
% \end{figure}
% % ****************************************

\noindent \textbf{Modeling Node-Subgraph Correlations.} Intuitively, nodes are dependent on their regional neighborhoods and different nodes have different context subgraphs. Therefore, we consider the strong correlation between central nodes and their context subgraphs to design a self-supervision pretext task to contrast the real context subgraph with a fake one. Specifically, 
% as shown in Figure~\ref{fig:ns}, 
for the node representation, $h_{v}$, that captures the regional information in the context subgraph, we regard the context subgraph representation $s_{v}$ as the positive sample. Similar to the calculation of $\mathcal{L}_{TNC}$, we employ the random shuffle function $\mathcal{P}$ to corrupt other subgraph representations to generate negative samples, denoted as
\begin{equation}
    \{\widetilde{s_{1}}, \widetilde{s_{2}},...,\widetilde{s_{M}}\} = \mathcal{P}\{s_{1}, s_{2},...,s_{M}\}\nonumber
\end{equation}
We minimize the following Node-Subgraph Contrastive (NSC) loss:
\begin{equation}
    \mathcal{L}_{NSC} =\frac{1}{M}\Sigma_{m=1}^{M} \mathcal{L}_{HFC}(h_{m,v},s_{m,v},\mathcal{P}\{s_{1}, s_{2},...,s_{M}\})
\end{equation}

\noindent \textbf{Overall Objective Loss.} Our overall objective function is a weighted combination of the five terms above:
\begin{equation}
\begin{aligned}
    \mathcal{L}_{HASH-CODE}&=\lambda_{TC}\mathcal{L}_{TC}+\lambda_{NC}\mathcal{L}_{NC}+\lambda_{SC}\mathcal{L}_{SC}\\
    &+\lambda_{TNC}\mathcal{L}_{TNC}+\lambda_{NSC}\mathcal{L}_{NSC},
\end{aligned}
\end{equation}
where $\lambda_{TC},\lambda_{NC},\lambda_{SC},\lambda_{TNC}$ and $\lambda_{NSC}$ are hyper-parameters that balance the contribution of each term. 
% We summarize the workflow of our proposed HASH-CODE in Appendix~\ref{sec:workflow}.

\section{Experiments}
\label{sec:exp}
\subsection{Experimental Setup}
% In this section, we have conducted extensive experiments, and analyzed the performance of the proposed HASH-CODE method.
% by addressing the following key research questions as follows:
% \begin{itemize}[leftmargin=*,noitemsep,topsep=0pt]
%     \item \textbf{RQ1:} How does our method perform compared with baseline methods?
%     \item \textbf{RQ2:} How does each component of our method contribute to the performance?
%     \item \textbf{RQ3:} How about the efficiency of our proposed model compared with other baselines?
%     \item \textbf{RQ4:} How does our method perform when facing the issue of data sparsity?
%     \item \textbf{RQ5:} How do different hyper-parameters affect our method?
% \end{itemize}

\subsubsection{Datasets}
We conduct experiments on six datasets (\textit{i.e.,} DBLP\footnote{\noindent  \url{https://originalstatic.aminer.cn/misc/dblp.v12.7z}}, Wikidata5M\footnote{\noindent   \url{https://deepgraphlearning.github.io/project/wikidata5m}}~\citep{wang2021kepler}, Beauty, Sports and Toys from Amazon dataset\footnote{\noindent  \url{ http://snap.stanford.edu/data/amazon/}}~\citep{mcauley2015image} and Product Graph) from three different domains (i.e., academic papers, social media posts, and e-commerce). We
leverage three common metrics to measure the prediction accuracy: Precision@1 (P@1), NDCG, and MRR. 
% Detailed information about the datasets can be found in Appendix~\ref{sec:data}. 
The statistics of the six datasets are summarized in Table~\ref{tab:dataset}.

\begin{table}[t]
    \centering
    \caption{Statistics of datasets after preprocessing.}
    \resizebox{0.5\textwidth}{!}{
    \begin{tabular}{ccccccc}
        \toprule
           Dataset      & Product   & Beauty & Sports & Toys & DBLP & Wiki\\
        \midrule
        \#Users  & 13,647,591 & 22,363 & 25,598 & 19,412 & N/A & N/A \\
        \#Items & 5,643,688 & 12,101 & 18,357 & 11,924 & 4,894,081 & 4,818,679  \\
        \#N   & 4.71 & 8.91 & 8.28 & 8.60 & 9.31 & 8.86  \\
        \#Train  & 22,146,934 & 188,451 & 281,332 & 159,111 & 3,009,506 & 7,145,834  \\
        \#Valid  & 30,000 & 3,770 & 5,627 & 3,182&  60,000 & 66,167  \\
        \#Test   & 306,742 & 6,280 & 9,377 & 5,304 & 100,000  & 100,000  \\
        \bottomrule
    \end{tabular}}
    \label{tab:dataset}
    % \vspace{-0.4cm}
\end{table}

\subsubsection{Baselines}
We compare HASH-CODE with three types of baselines: (1) GNN-cascaded transformers, which includes BERT+Ma-xSAGE~\citep{hamilton2017inductive}, BERT+MeanSAGE~\citep{hamilton2017inductive},  BERT+GAT~\citep{velivckovic2017graph}, TextGNN~\citep{zhu2021textgnn}, and AdsGNN~\citep{li2021adsgnn}. (2) GNN-nested transformers, which includes GraphFormers~\citep{yang2021graphformers}, and Heterformer~\citep{jin2022heterformer}. (3)
Vanilla GraphSAGE~\citep{hamilton2017inductive}, Vanilla GAT~\citep{velivckovic2017graph}, Vanilla BERT~\citep{devlin2018bert} and Twin-Bert~\citep{lu2020twinbert}. 
% Detailed information about the baselines can be found in Appendix~\ref{sec:baseline}.

\subsubsection{Reproducibility.} 
For all compared models, we adopt the 12-layer BERT-base-uncased~\citep{devlin2018bert} in the
huggingface as the backbone PLM for a fair comparison. The models are
trained for at most 100 epochs on all datasets. We use an early
stopping strategy on P@1 with a patience of 2 epochs. The size of minimal training batch is
64, learning rate is set to $1e-5$. We pad the sequence length to 32 for Product, DBLP and Amazon datasets, 64 for Wiki, depending on different text length of each dataset. Adam optimizer~\citep{kingma2014adam} is employed to minimize the training loss. Other parameters are tuned on the validation dataset and we save the checkpoint with the best validation performance as the final model. Parameters in baselines are carefully tuned on the validation set to select the most desirable parameter setting.

\begin{table*}[h]
\huge
\renewcommand{\arraystretch}{1.5}
\centering
  \caption{Experiment results of link prediction. The results of the best performing baseline are underlined.
  % (HASH-CODE marked in bold, the best baseline underlined). 
  The numbers in bold indicate statistically significant improvement (p < .01) by the pairwise t-test comparisons over the other baselines.}
  \label{tab:main}
  \resizebox{1.0\textwidth}{!}{
  \begin{tabular}{ccccccccccccccc}
    \toprule
    Datasets & Metric & MeanSAGE & GAT & Bert & Twin-Bert & Bert+MeanSAGE & Bert+MaxSAGE & Bert+GAT & TextGNN &  AdsGNN & GraphFormers & Heterformer & HASH-CODE & Improv. \\ 
    \midrule
    \multirow{3}{*}{Product} & P@1 & 0.6071 & 0.6049 & 0.6563 & 0.6492 & 0.7240 & 0.7250 & 0.7270 & 0.7431 & 0.7623 & 0.7786 & \underline{0.7820} & $\textbf{0.7967}^{*}$ & 1.88\%  \\ & NDCG & 0.7384 & 0.7401 & 0.7911 & 0.7907 & 0.8337 & 0.8371 & 0.8378 & 0.8494 & 0.8605 & 0.8793 & \underline{0.8861} & $\textbf{0.9039}^{*}$ & 2.01\% \\ & MRR & 0.6619 & 0.6627 & 0.7344 & 0.7285 & 0.7871 & 0.7832 & 0.7880 & 0.8107 & 0.8361 & 0.8430 & \underline{0.8492} & $\textbf{0.8706}^{*}$ & 2.52\% \\
    \midrule
    \multirow{3}{*}{Beauty} & P@1 & 0.1376 & 0.1367 & 0.1528 & 0.1492 & 0.1593 & 0.1586 & 0.1544 & 0.1625 & 0.1669 & \underline{0.1774} & 0.1739 & $\textbf{0.1862}^{*}$ & 4.96\% \\ & NDCG & 0.2417 & 0.2469 & 0.2702 & 0.2683 & 0.2741 & 0.2756 & 0.2726 & 0.2863 & 0.2891 & \underline{0.2919} & 0.2911 & $\textbf{0.3061}^{*}$ & 4.86\% \\ & MRR & 0.2558 & 0.2549 & 0.2680 & 0.2638 & 0.2712 & 0.2759 & 0.2720 & 0.2802 & 0.2821 & \underline{0.2893} & 0.2841 & $\textbf{0.3057}^{*}$ & 5.67\% \\
    \midrule
    \multirow{3}{*}{Sports} & P@1 & 0.1102 & 0.1088  & 0.1275 & 0.1237 & 0.1330 & 0.1311 & 0.1302 & 0.1421 & 0.1466 & \underline{0.1548} & 0.1534 & $\textbf{0.1623}^{*}$ & 4.84\% \\ & NDCG & 0.2091 & 0.2116 & 0.2375 & 0.2297 & 0.2432 & 0.2478 & 0.2419 & 0.2537 & 0.2582 & 0.2674 & \underline{0.2692} & $\textbf{0.2775}^{*}$ & 3.08\% \\ & MRR & 0.2171 & 0.2168 & 0.2319 & 0.2296 & 0.2434 & 0.2471 & 0.2397 & 0.2612 & 0.2653 & \underline{0.2679} & 0.2640 & $\textbf{0.2754}^{*}$ & 2.80\% \\
    \midrule
    \multirow{3}{*}{Toys} & P@1 & 0.1342 & 0.1330 & 0.1498 & 0.1427 & 0.1520 & 0.1536 & 0.1514 & 0.1658 & 0.1674 & \underline{0.1703} & 0.1685 & $\textbf{0.1767}^{*}$ & 3.76\% \\ & NDCG & 0.2015 & 0.2028 & 0.2249 & 0.2206 & 0.2451 & 0.2486 & 0.2413 & 0.2692 & 0.2734 & \underline{0.2859} & 0.2823 & $\textbf{0.2946}^{*}$ & 3.04\% \\ & MRR & 0.2173 & 0.2149 & 0.2311 & 0.2276 & 0.2509 & 0.2527 & 0.2476 & 0.2648 & 0.2715 & \underline{0.2803} & 0.2778 & $\textbf{0.2919}^{*}$ & 4.14\% \\
    \midrule
    %细数数据集上的improv.更大
    \multirow{3}{*}{DBLP} & P@1 & 0.4963 & 0.4931 & 0.5673 & 0.5590 & 0.6533 & 0.6596 & 0.6634 & 0.6913 & 0.7102 & 0.7267 & \underline{0.7288} & $\textbf{0.7446}^{*}$ & 2.17\% \\ & NDCG & 0.6997 & 0.6981 & 0.7484 & 0.7417 & 0.8004 & 0.8059 & 0.8086 & 0.8331 & 0.8507 & 0.8565 & \underline{0.8576} & $\textbf{0.8823}^{*}$ & 2.88\% \\ & MRR & 0.6314 & 0.6309 & 0.6777 & 0.6643 & 0.7266 & 0.7067 & 0.7300 & 0.7792 & 0.7805 & 0.8133 & \underline{0.8148} & $\textbf{0.8428}^{*}$ & 3.44\% \\
    \midrule
    \multirow{3}{*}{Wiki} & P@1 & 0.2850 & 0.2862 & 0.3066 & 0.3015 & 0.3306 & 0.3264 & 0.3412 & 0.3693 & 0.3820 & \underline{0.3952} & 0.3947 & $\textbf{0.4104}^{*}$ & 3.85\% \\ & NDCG & 0.5389 & 0.5357 & 0.5699 & 0.5613 & 0.5730 & 0.5737 & 0.6071 & 0.6098 & 0.6155 & 0.6230 & \underline{0.6233} & $\textbf{0.6402}^{*}$ & 2.71\% \\ & MRR & 0.4411 & 0.4436 & 0.4712 & 0.4602 & 0.4980 & 0.4970 & 0.5022 & 0.5097 & 0.5134 & \underline{0.5220} & 0.5216 & $\textbf{0.5356}^{*}$ & 2.61\% \\ \bottomrule
  \end{tabular} }
  % \vspace{-0.3cm}
\end{table*}

\subsection{Overall Comparison}
Following previous studies~\citep{yang2021graphformers,jin2022heterformer} on network representation learning,
we consider two fundamental task:
link prediction and node classification. 
To save space, we will mainly present the results on link prediction here and save the node classification part to Appendix~\ref{sec:node_classification}. The overall evaluation results are reported in Table~\ref{tab:main}. We have the following observations:

% \noindent \textbf{Settings.} 
% The link prediction experiments are evaluated in terms of link prediction accuracy, i.e., to predict whether a query node and key node are connected given the textual features of themselves and their neighbours. For Product, DBLP and Wiki datasets, in each testing instance, one query is provided with 300 keys: 1 positive plus 299 randomly sampled negative cases.

% We leverage three common metrics to measure the prediction accuracy: Precision@1, MRR, and NDCG. Given a query node $u$, Precision@1 measures whether the key node $v$ linked with $u$ is ranked the highest in the batch; MRR calculates the average of the reciprocal ranks of $v$; NDCG further takes the order and relative importance of $v$ into account and here we calculate on the full candidate list, the length of which equals to test batch size.

% \noindent \textbf{Results.} 

In comparing vanilla textual and graph models across various datasets, we find a consistent performance ranking: BERT outperforms Twin-BERT, which in turn exceeds GAT and GraphSAGE. This hierarchy reveals GNN models' limitations in capturing rich textual semantics due to their focus on node proximity and global structural information. Specifically, the superior performance of the one-tower BERT model over the two-tower Twin-BERT model underscores the advantage of integrating information from both sides, despite BERT's potential inefficiency in low-latency scenarios due to one-by-one similarity computations.

% For four vanilla textual/graph baselines, the performance order is consistent across all datasets, \textit{i.e.,} $\textnormal{Bert} > \textnormal{Twin-Bert} > \textnormal{GAT} \approx \textnormal{GraphSAGE}$. GNN models obtain the worst performance, as they can only model the node proximity that preserved by the global structural information, but fail to encode the textual information that presents rich semantics to characterize the property of each node. This demonstrates the importance of leveraging the local textual information of individual nodes. As for the vanilla textual baselines, the one-tower textual model (BERT) outperforms the two-tower model (Twin-BERT) as it can incorporate the information from both sides, while two-tower models can only exploit the data from a single side. However, one-tower structure has to compute the similarity between a search query and each ad one-by-one, which is not suitable for low-latency online scenario. In general, vanilla textual/graph models  perform worse than GNN-cascaded transformers, which demonstrates the importance of encoding both text and network signals in text-attributed graphs.

As for GNN-cascaded transformers, BERT+GAT generally surpasses BERT+MeanSAGE and BERT+MaxSAGE in modeling attributes on the Product, DBLP, and Wiki datasets, attributed to its multi-head self-attention mechanism. However, its performance dips on the Beauty, Sports, and Toys datasets, likely due to noise from keyword-based attributes in Amazon Reviews. Despite these variations, GNN-cascaded transformers fall short of co-training methods, largely because of the static nature of node textual features during training. Among the models, AdsGNN consistently leads over TextGNN across all datasets. This highlights the effectiveness of AdsGNN's node-level aggregation model in capturing the nuanced roles of queries and keys, proving a tightly-coupled structure's superiority in integrating graph and textual data.

% As for GNN-cascaded transformers, Bert+GAT performs better than Bert+MeanSAGE and Bert+MaxSAGE on Product, DBLP and Wiki datasets, because the multi-head self-attention mechanism has a stronger capacity to model attributes. However, the performance of GAT is worse than that of MeanSAGE on Beauty, Sports and Toys datasets. A potential reason is that the multi-head self-attention may incorporate more noise from the attributes since they are keywords extracted from the reviews on Amazon Reviews. In general, GNN-cascaded transformers perform worse than co-training-based methods, which may be due to the node textual features are pre-existed and fixed in the training phase, leading to the limited expression capacity. AdsGNN consistently outperforms TextGNN on all datasets. This is because compared with TextGNN, the node-level aggregation model AdsGNN can capture the different roles of queries and keys, demonstrating that the tightly-coupled structure is more powerful than the loosely-coupled framework in deeply fusing the graph and textual information.

For GNN-nested transformers, Heterformer outperforms Graphformers on denser networks like those of the Product and DBLP datasets compared to the Amazon datasets. Our HASH-CODE consistently outshines all baselines, achieving 2\%$\sim$4\% relative improvements on six datasets against the most competitive ones.  These findings affirm the efficacy of contrastive learning in enhancing co-training architectures for representation learning tasks.

\begin{figure}[h]
\subfigure[P@1]{\centering
    \includegraphics[width=0.48\linewidth]{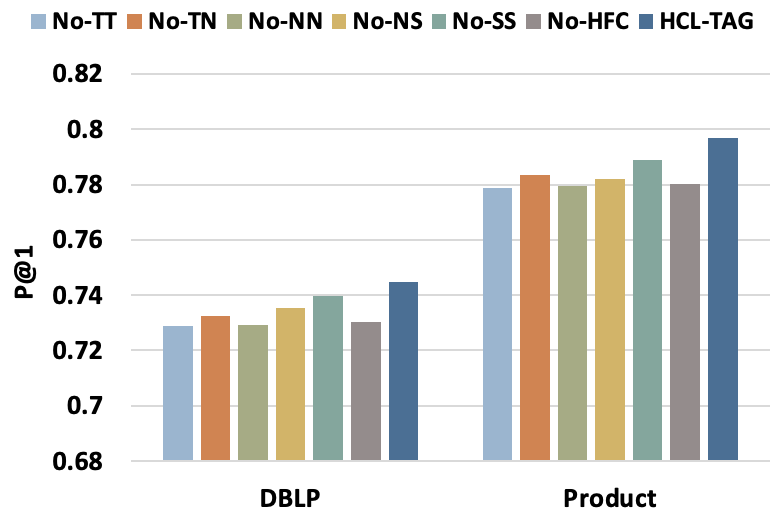}
    }
\subfigure[NDCG]{\centering
    \includegraphics[width=0.48\linewidth]{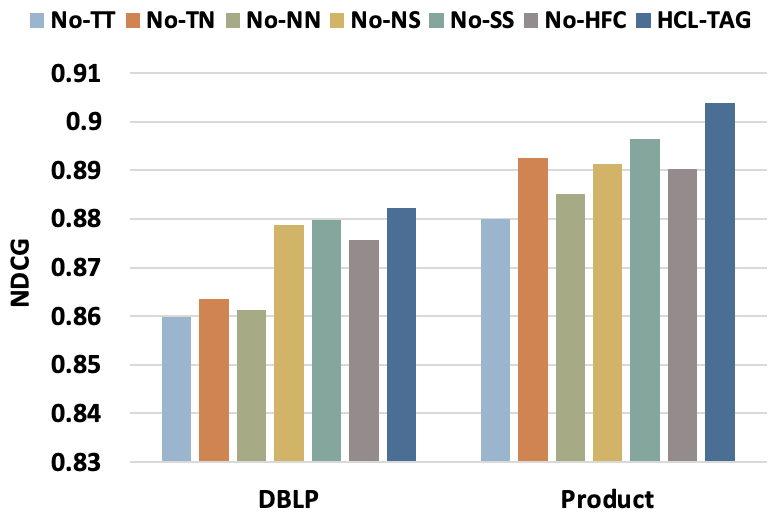}
    }
    \caption{Ablation studies of different components on DBLP and Products datasets.}
    \label{fig:ablation}
    % \vspace{-0.3cm}
\end{figure}
% \vspace{-0.2cm}

\begin{table*}[t]
% \Large
  \caption{Time and memory costs per mini-batch for GraphFormers and HASH-CODE, with neighbour size increased from 3 to 200. HASH-CODE achieve similar efficiency and scalability as GraphFormers.}
  \label{tab:efficiency}
  {
  \begin{tabular}{cccccccc}
    \toprule
    \#N & 3 & 5 & 10 & 20 & 50 & 100 & 200 \\
    \midrule
    Time: GraphFormers  & $63.95$ms  & $97.19$ms & $170.16$ms & $306.12$ms & $714.32$ms & $1411.09$ms & $2801.67$ms   \\
    Time: HASH-CODE & $67.68$ms & $105.35$ms & $180.03$ms &  $324.11$ms& $754.97$ms & $1573.29$ms & $2962.86$ms  \\
    \midrule
    Mem: GraphFormers & $1.33$GiB & $1.39$GiB & $1.55$GiB & $1.83$GiB & $2.70$GiB & $4.28$GiB & $7.33$GiB   \\
    Mem: HASH-CODE & $1.33$GiB & $1.39$GiB & $1.55$GiB & $1.84$GiB & $2.72$GiB & $4.43$GiB & $7.72$GiB   \\
    \bottomrule
  \end{tabular}
  }
\end{table*}

\subsection{Ablation Study}
Our proposed HASH-CODE designs five pre-training objectives based on the HFC-aware contrastive objective. In this section, we conduct the ablation study
on Product and DBLP datasets to analyze the contribution of each objective. 
We evaluate the performance of several HASH-CODE variants: (a) No-TT removes the $\mathcal{L}_{TC}$; (b) No-TN removes the $\mathcal{L}_{TNC}$; (c) No-NN removes the $\mathcal{L}_{NC}$; (d) No-NS removes the $\mathcal{L}_{NSC}$; (e) No-SS removes the $\mathcal{L}_{SC}$; (f) No-HFC replaces the HFC-aware loss with spectral contrastive loss. The results from GraphFormers are also provided for comparison. P@1 and NDCG@10 are adopted for this evaluation.

From Figure~\ref{fig:ablation}, we can observe that removing any contrastive learning objective would lead to the performance decrease, indicating all the objectives are useful to capture the correlations in varying levels of granularity in TAGs. Besides, the importance of these objectives is varying on different datasets. Overall, $\mathcal{L}_{TC}$ is more important than others. Removing it yields a larger drop of performance on all datasets, indicating that natural language understanding is more important on these datasets. In addition, No-HFC performs worse than the other variants, indicating the importance of learning more discriminative embeddings with HFC.

\subsection{Efficiency Analysis}
We compare the time efficiency between HASH-CODE, and GNN-nested Transformers (GraphFormers). The evaluation is conducted utilizing an Nvidia 3090 GPU. We follow the same setting with~\citep{yang2021graphformers}, where each mini-batch contains 32 encoding instances; each instance contains one center and \#N neighbour nodes; the token length of each node is 16. We report the average time and memory (GPU RAM) costs per mini-batch in Table~\ref{tab:efficiency}. 

We find that the time and memory costs associated with these methods exhibit a linear escalation in tandem with the augmentation of neighboring elements. Meanwhile, the overall time and memory costs of HASH-CODE exhibit a remarkable proximity to GraphFormers, especially when the number of neighbor nodes is small. In light of the above observations, it is reasonable to deduce that HASH-CODE exhibits superior accuracy while concurrently maintaining comparable levels of efficiency and scalability when juxtaposed with GNN-nested transformers.

% Firstly, the time and memory costs of these methods grow linearly with the increment of neighbours. (There are overheads of time and memory costs. The time cost overhead may come from CPU processing; while the memory cost overhead is mainly due to the model parameters~\citep{rajbhandari2020zero}). We may approximately remove the overheads by deducting the time and memory costs where \#N=3). 

% Secondly, the overall time and memory costs of HASH-CODE are quite close to GraphFormers. When the number of neighbour nodes is small, the differences between both methods are almost ignorable. The differences become slightly larger when more neighbour nodes are included. However, the differences are still relatively small: merely around 5.8\% of the overall running costs when \#N is increased to 200.

% Based on the above observations, we may conclude that HASH-CODE are more accurate, meanwhile equally efficient and scalable as the GNN-nested tranformers.

% \begin{figure}[h]
% \subfigure[DBLP]{\centering
%     \includegraphics[width=0.48\linewidth]{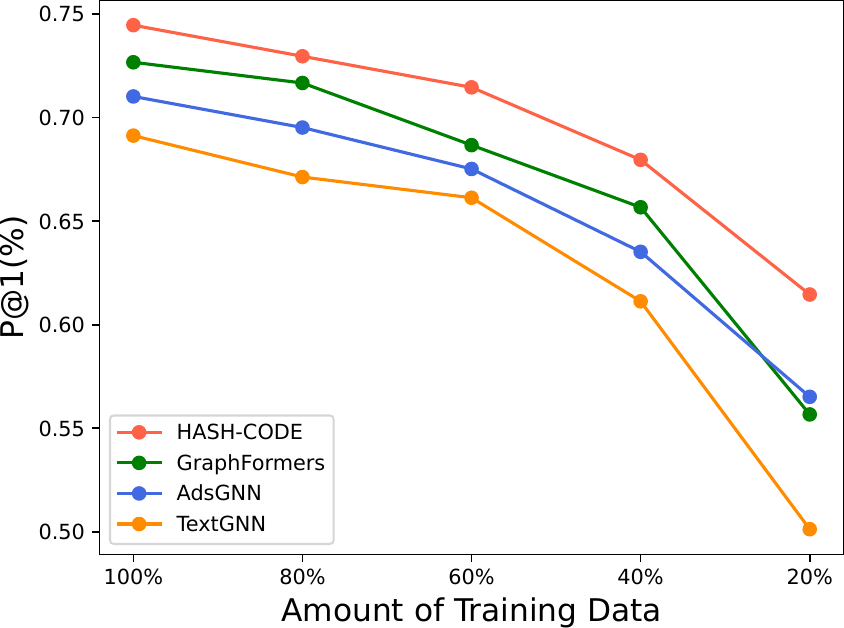}
%     }
% \subfigure[Product]{\centering
%     \includegraphics[width=0.48\linewidth]{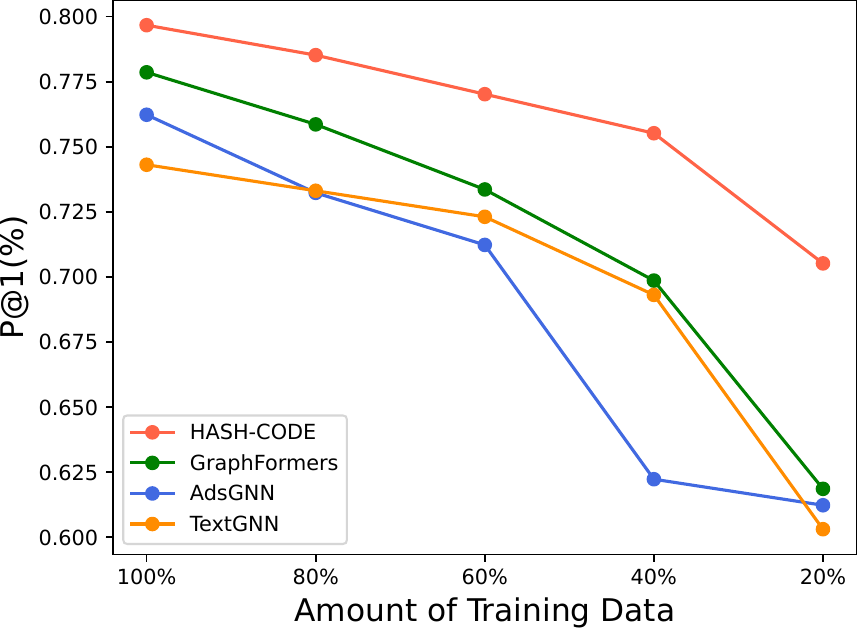}
%     }
%     \caption{Performance (P@1) comparison w.r.t. different sparsity levels on DBLP and Product datasets. The performance substantially drops when less training data is used, while  HASH-CODE is consistently better than baselines in all
% cases, especially in an extreme sparsity level (20\%).}
%     \label{fig:sparsity}
%     % \vspace{-0.3cm}
% \end{figure}

\subsection{In-depth Analysis}
We continue to investigate several properties of the models 
in the next couple sections. To save space, we will mainly present the results here and save the details to the appendix:
\begin{itemize}[leftmargin=*,noitemsep,topsep=0pt]
    \item In Appendix~\ref{sec:sparsity}, we
simulate the data sparsity scenarios by using different proportions of the full dataset. We find that HASH-CODE is consistently better than baselines in all cases, especially in an extreme sparsity level (20\%). This observation implies that HASH-CODE is able to make better use of the data with the contrastive learning method, which alleviates the influence of data sparsity problem for representation learning to some extent.
    \item In Appendix~\ref{sec:epochs}, we investigate the influence of the number of training epochs on our performance. The results show that our model benefits mostly from the first 20 training epochs. And after that, the performance improves slightly. Based on this observation, we can conclude that the correlations among different views on TAGs can be well-captured by our contrastive learning
approach through training within a small number of epochs. So that the enhanced data representations can improve the performance of the downstream tasks.
    \item In Appendix~\ref{sec:neighbor_size}, we analyze the impact of neighbourhood size with a fraction of neighbour nodes randomly sampled for each center node. We can observe that with the increasing number of neighbour nodes, both HASH-CODE and Graphformers achieve higher prediction accuracies. However, the marginal gain is varnishing, as the relative improvement becomes smaller when more neighbours are included. In all the testing cases, HASH-CODE maintains consistent advantages over GraphFormers, which demonstrates the effectiveness of our proposed method.
    % \item In Appendix~\ref{sec:visualization}, we visualize the input node embeddings for different target classes by t-SNE ~\citep{van2008visualizing} to intuitively study the impact of our HFC-loss. We find that our $\mathcal{L}_{HFC}$ helps the model learn more discriminative node embeddings compared with $\mathcal{L}_{Spectral}$. 
\end{itemize}

\section{Conclusion}
\label{sec:con}
In this paper, we introduce the problem of node representation learning on TAGs and propose HASH-CODE, a hierarchical contrastive learning architecture to address the problem. Different from previous “cascaded architectures”, HASH-CODE utilizes five self-supervised optimization objectives to facilitate thorough mutual enhancement between network and text signals in different granularities. We also propose a HFC-aware spectral contrastive loss to learn more discriminative node embeddings. Experimental results on various graph mining tasks, including link prediction and node classification demonstrate the superiority of HASH-CODE. Moreover, the proposed framework can serve as a building block with different task-specific inductive biases. It would be interesting to see its future applications on real-world TAGs such as recommendation, abuse detection and tweet-based network analysis.
% \clearpage

\bibliographystyle{ACM-Reference-Format}
\balance
\bibliography{citation}

%%% -*-BibTeX-*-
%%% Do NOT edit. File created by BibTeX with style
%%% ACM-Reference-Format-Journals [18-Jan-2012].

\begin{thebibliography}{77}

%%% ====================================================================
%%% NOTE TO THE USER: you can override these defaults by providing
%%% customized versions of any of these macros before the \bibliography
%%% command.  Each of them MUST provide its own final punctuation,
%%% except for \shownote{}, \showDOI{}, and \showURL{}.  The latter two
%%% do not use final punctuation, in order to avoid confusing it with
%%% the Web address.
%%%
%%% To suppress output of a particular field, define its macro to expand
%%% to an empty string, or better, \unskip, like this:
%%%
%%% \newcommand{\showDOI}[1]{\unskip}   % LaTeX syntax
%%%
%%% \def \showDOI #1{\unskip}           % plain TeX syntax
%%%
%%% ====================================================================

\ifx \showCODEN    \undefined \def \showCODEN     #1{\unskip}     \fi
\ifx \showDOI      \undefined \def \showDOI       #1{#1}\fi
\ifx \showISBNx    \undefined \def \showISBNx     #1{\unskip}     \fi
\ifx \showISBNxiii \undefined \def \showISBNxiii  #1{\unskip}     \fi
\ifx \showISSN     \undefined \def \showISSN      #1{\unskip}     \fi
\ifx \showLCCN     \undefined \def \showLCCN      #1{\unskip}     \fi
\ifx \shownote     \undefined \def \shownote      #1{#1}          \fi
\ifx \showarticletitle \undefined \def \showarticletitle #1{#1}   \fi
\ifx \showURL      \undefined \def \showURL       {\relax}        \fi
% The following commands are used for tagged output and should be
% invisible to TeX
\providecommand\bibfield[2]{#2}
\providecommand\bibinfo[2]{#2}
\providecommand\natexlab[1]{#1}
\providecommand\showeprint[2][]{arXiv:#2}

\bibitem[Bi et~al\mbox{.}(2021)]%
        {bi2021leveraging}
\bibfield{author}{\bibinfo{person}{Shuxian Bi}, \bibinfo{person}{Chaozhuo Li}, \bibinfo{person}{Xiao Han}, \bibinfo{person}{Zheng Liu}, \bibinfo{person}{Xing Xie}, \bibinfo{person}{Haizhen Huang}, {and} \bibinfo{person}{Zengxuan Wen}.} \bibinfo{year}{2021}\natexlab{}.
\newblock \showarticletitle{Leveraging Bidding Graphs for Advertiser-Aware Relevance Modeling in Sponsored Search}. In \bibinfo{booktitle}{\emph{Findings of the Association for Computational Linguistics: EMNLP 2021}}. \bibinfo{pages}{2215--2224}.
\newblock


\bibitem[Bo et~al\mbox{.}(2021)]%
        {bo2021beyond}
\bibfield{author}{\bibinfo{person}{Deyu Bo}, \bibinfo{person}{Xiao Wang}, \bibinfo{person}{Chuan Shi}, {and} \bibinfo{person}{Huawei Shen}.} \bibinfo{year}{2021}\natexlab{}.
\newblock \showarticletitle{Beyond low-frequency information in graph convolutional networks}. In \bibinfo{booktitle}{\emph{Proceedings of the AAAI Conference on Artificial Intelligence}}, Vol.~\bibinfo{volume}{35}. \bibinfo{pages}{3950--3957}.
\newblock


\bibitem[Cai and Wang(2020)]%
        {cai2020note}
\bibfield{author}{\bibinfo{person}{Chen Cai} {and} \bibinfo{person}{Yusu Wang}.} \bibinfo{year}{2020}\natexlab{}.
\newblock \showarticletitle{A note on over-smoothing for graph neural networks}.
\newblock \bibinfo{journal}{\emph{arXiv preprint arXiv:2006.13318}} (\bibinfo{year}{2020}).
\newblock


\bibitem[Chen et~al\mbox{.}(2020c)]%
        {chen2020measuring}
\bibfield{author}{\bibinfo{person}{Deli Chen}, \bibinfo{person}{Yankai Lin}, \bibinfo{person}{Wei Li}, \bibinfo{person}{Peng Li}, \bibinfo{person}{Jie Zhou}, {and} \bibinfo{person}{Xu Sun}.} \bibinfo{year}{2020}\natexlab{c}.
\newblock \showarticletitle{Measuring and relieving the over-smoothing problem for graph neural networks from the topological view}. In \bibinfo{booktitle}{\emph{Proceedings of the AAAI Conference on Artificial Intelligence}}, Vol.~\bibinfo{volume}{34}. \bibinfo{pages}{3438--3445}.
\newblock


\bibitem[Chen et~al\mbox{.}(2020b)]%
        {chen2020simple}
\bibfield{author}{\bibinfo{person}{Ting Chen}, \bibinfo{person}{Simon Kornblith}, \bibinfo{person}{Mohammad Norouzi}, {and} \bibinfo{person}{Geoffrey Hinton}.} \bibinfo{year}{2020}\natexlab{b}.
\newblock \showarticletitle{A simple framework for contrastive learning of visual representations}. In \bibinfo{booktitle}{\emph{International conference on machine learning}}. PMLR, \bibinfo{pages}{1597--1607}.
\newblock


\bibitem[Chen et~al\mbox{.}(2020a)]%
        {chen2020improved}
\bibfield{author}{\bibinfo{person}{Xinlei Chen}, \bibinfo{person}{Haoqi Fan}, \bibinfo{person}{Ross Girshick}, {and} \bibinfo{person}{Kaiming He}.} \bibinfo{year}{2020}\natexlab{a}.
\newblock \showarticletitle{Improved baselines with momentum contrastive learning}.
\newblock \bibinfo{journal}{\emph{arXiv preprint arXiv:2003.04297}} (\bibinfo{year}{2020}).
\newblock


\bibitem[Chen et~al\mbox{.}(2019)]%
        {chen2019drop}
\bibfield{author}{\bibinfo{person}{Yunpeng Chen}, \bibinfo{person}{Haoqi Fan}, \bibinfo{person}{Bing Xu}, \bibinfo{person}{Zhicheng Yan}, \bibinfo{person}{Yannis Kalantidis}, \bibinfo{person}{Marcus Rohrbach}, \bibinfo{person}{Shuicheng Yan}, {and} \bibinfo{person}{Jiashi Feng}.} \bibinfo{year}{2019}\natexlab{}.
\newblock \showarticletitle{Drop an octave: Reducing spatial redundancy in convolutional neural networks with octave convolution}. In \bibinfo{booktitle}{\emph{Proceedings of the IEEE/CVF International Conference on Computer Vision}}. \bibinfo{pages}{3435--3444}.
\newblock


\bibitem[Chien et~al\mbox{.}(2021)]%
        {chien2021node}
\bibfield{author}{\bibinfo{person}{Eli Chien}, \bibinfo{person}{Wei-Cheng Chang}, \bibinfo{person}{Cho-Jui Hsieh}, \bibinfo{person}{Hsiang-Fu Yu}, \bibinfo{person}{Jiong Zhang}, \bibinfo{person}{Olgica Milenkovic}, {and} \bibinfo{person}{Inderjit~S Dhillon}.} \bibinfo{year}{2021}\natexlab{}.
\newblock \showarticletitle{Node feature extraction by self-supervised multi-scale neighborhood prediction}.
\newblock \bibinfo{journal}{\emph{arXiv preprint arXiv:2111.00064}} (\bibinfo{year}{2021}).
\newblock


\bibitem[Chung(1997)]%
        {chung1997spectral}
\bibfield{author}{\bibinfo{person}{Fan~RK Chung}.} \bibinfo{year}{1997}\natexlab{}.
\newblock \bibinfo{booktitle}{\emph{Spectral graph theory}}. Vol.~\bibinfo{volume}{92}.
\newblock \bibinfo{publisher}{American Mathematical Soc.}
\newblock


\bibitem[Devlin et~al\mbox{.}(2018)]%
        {devlin2018bert}
\bibfield{author}{\bibinfo{person}{Jacob Devlin}, \bibinfo{person}{Ming-Wei Chang}, \bibinfo{person}{Kenton Lee}, {and} \bibinfo{person}{Kristina Toutanova}.} \bibinfo{year}{2018}\natexlab{}.
\newblock \showarticletitle{Bert: Pre-training of deep bidirectional transformers for language understanding}.
\newblock \bibinfo{journal}{\emph{arXiv preprint arXiv:1810.04805}} (\bibinfo{year}{2018}).
\newblock


\bibitem[Eckart and Young(1936)]%
        {eckart1936approximation}
\bibfield{author}{\bibinfo{person}{Carl Eckart} {and} \bibinfo{person}{Gale Young}.} \bibinfo{year}{1936}\natexlab{}.
\newblock \showarticletitle{The approximation of one matrix by another of lower rank}.
\newblock \bibinfo{journal}{\emph{Psychometrika}} \bibinfo{volume}{1}, \bibinfo{number}{3} (\bibinfo{year}{1936}), \bibinfo{pages}{211--218}.
\newblock


\bibitem[Gururangan et~al\mbox{.}(2020)]%
        {gururangan2020don}
\bibfield{author}{\bibinfo{person}{Suchin Gururangan}, \bibinfo{person}{Ana Marasovi{\'c}}, \bibinfo{person}{Swabha Swayamdipta}, \bibinfo{person}{Kyle Lo}, \bibinfo{person}{Iz Beltagy}, \bibinfo{person}{Doug Downey}, {and} \bibinfo{person}{Noah~A Smith}.} \bibinfo{year}{2020}\natexlab{}.
\newblock \showarticletitle{Don't stop pretraining: Adapt language models to domains and tasks}.
\newblock \bibinfo{journal}{\emph{arXiv preprint arXiv:2004.10964}} (\bibinfo{year}{2020}).
\newblock


\bibitem[Hamilton et~al\mbox{.}(2017)]%
        {hamilton2017inductive}
\bibfield{author}{\bibinfo{person}{Will Hamilton}, \bibinfo{person}{Zhitao Ying}, {and} \bibinfo{person}{Jure Leskovec}.} \bibinfo{year}{2017}\natexlab{}.
\newblock \showarticletitle{Inductive representation learning on large graphs}.
\newblock \bibinfo{journal}{\emph{Advances in neural information processing systems}}  \bibinfo{volume}{30} (\bibinfo{year}{2017}).
\newblock


\bibitem[HaoChen et~al\mbox{.}(2021)]%
        {haochen2021provable}
\bibfield{author}{\bibinfo{person}{Jeff~Z HaoChen}, \bibinfo{person}{Colin Wei}, \bibinfo{person}{Adrien Gaidon}, {and} \bibinfo{person}{Tengyu Ma}.} \bibinfo{year}{2021}\natexlab{}.
\newblock \showarticletitle{Provable guarantees for self-supervised deep learning with spectral contrastive loss}.
\newblock \bibinfo{journal}{\emph{Advances in Neural Information Processing Systems}}  \bibinfo{volume}{34} (\bibinfo{year}{2021}), \bibinfo{pages}{5000--5011}.
\newblock


\bibitem[Hassani and Khasahmadi(2020)]%
        {hassani2020contrastive}
\bibfield{author}{\bibinfo{person}{Kaveh Hassani} {and} \bibinfo{person}{Amir~Hosein Khasahmadi}.} \bibinfo{year}{2020}\natexlab{}.
\newblock \showarticletitle{Contrastive multi-view representation learning on graphs}. In \bibinfo{booktitle}{\emph{International Conference on Machine Learning}}. PMLR, \bibinfo{pages}{4116--4126}.
\newblock


\bibitem[Hastie et~al\mbox{.}(2009)]%
        {hastie2009elements}
\bibfield{author}{\bibinfo{person}{Trevor Hastie}, \bibinfo{person}{Robert Tibshirani}, \bibinfo{person}{Jerome~H Friedman}, {and} \bibinfo{person}{Jerome~H Friedman}.} \bibinfo{year}{2009}\natexlab{}.
\newblock \bibinfo{booktitle}{\emph{The elements of statistical learning: data mining, inference, and prediction}}. Vol.~\bibinfo{volume}{2}.
\newblock \bibinfo{publisher}{Springer}.
\newblock


\bibitem[He et~al\mbox{.}(2020)]%
        {he2020momentum}
\bibfield{author}{\bibinfo{person}{Kaiming He}, \bibinfo{person}{Haoqi Fan}, \bibinfo{person}{Yuxin Wu}, \bibinfo{person}{Saining Xie}, {and} \bibinfo{person}{Ross Girshick}.} \bibinfo{year}{2020}\natexlab{}.
\newblock \showarticletitle{Momentum contrast for unsupervised visual representation learning}. In \bibinfo{booktitle}{\emph{Proceedings of the IEEE/CVF conference on computer vision and pattern recognition}}. \bibinfo{pages}{9729--9738}.
\newblock


\bibitem[He et~al\mbox{.}(2012)]%
        {he2012guided}
\bibfield{author}{\bibinfo{person}{Kaiming He}, \bibinfo{person}{Jian Sun}, {and} \bibinfo{person}{Xiaoou Tang}.} \bibinfo{year}{2012}\natexlab{}.
\newblock \showarticletitle{Guided image filtering}.
\newblock \bibinfo{journal}{\emph{IEEE transactions on pattern analysis and machine intelligence}} \bibinfo{volume}{35}, \bibinfo{number}{6} (\bibinfo{year}{2012}), \bibinfo{pages}{1397--1409}.
\newblock


\bibitem[Hu et~al\mbox{.}(2019)]%
        {hu2019strategies}
\bibfield{author}{\bibinfo{person}{Weihua Hu}, \bibinfo{person}{Bowen Liu}, \bibinfo{person}{Joseph Gomes}, \bibinfo{person}{Marinka Zitnik}, \bibinfo{person}{Percy Liang}, \bibinfo{person}{Vijay Pande}, {and} \bibinfo{person}{Jure Leskovec}.} \bibinfo{year}{2019}\natexlab{}.
\newblock \showarticletitle{Strategies for pre-training graph neural networks}.
\newblock \bibinfo{journal}{\emph{arXiv preprint arXiv:1905.12265}} (\bibinfo{year}{2019}).
\newblock


\bibitem[Jeh and Widom(2003)]%
        {jeh2003scaling}
\bibfield{author}{\bibinfo{person}{Glen Jeh} {and} \bibinfo{person}{Jennifer Widom}.} \bibinfo{year}{2003}\natexlab{}.
\newblock \showarticletitle{Scaling personalized web search}. In \bibinfo{booktitle}{\emph{Proceedings of the 12th international conference on World Wide Web}}. \bibinfo{pages}{271--279}.
\newblock


\bibitem[Jiao et~al\mbox{.}(2020)]%
        {jiao2020sub}
\bibfield{author}{\bibinfo{person}{Yizhu Jiao}, \bibinfo{person}{Yun Xiong}, \bibinfo{person}{Jiawei Zhang}, \bibinfo{person}{Yao Zhang}, \bibinfo{person}{Tianqi Zhang}, {and} \bibinfo{person}{Yangyong Zhu}.} \bibinfo{year}{2020}\natexlab{}.
\newblock \showarticletitle{Sub-graph contrast for scalable self-supervised graph representation learning}. In \bibinfo{booktitle}{\emph{2020 IEEE international conference on data mining (ICDM)}}. IEEE, \bibinfo{pages}{222--231}.
\newblock


\bibitem[Jin et~al\mbox{.}(2022c)]%
        {jin2022heterformer}
\bibfield{author}{\bibinfo{person}{Bowen Jin}, \bibinfo{person}{Yu Zhang}, \bibinfo{person}{Qi Zhu}, {and} \bibinfo{person}{Jiawei Han}.} \bibinfo{year}{2022}\natexlab{c}.
\newblock \showarticletitle{Heterformer: A Transformer Architecture for Node Representation Learning on Heterogeneous Text-Rich Networks}.
\newblock \bibinfo{journal}{\emph{arXiv preprint arXiv:2205.10282}} (\bibinfo{year}{2022}).
\newblock


\bibitem[Jin et~al\mbox{.}(2021)]%
        {jin2021bite}
\bibfield{author}{\bibinfo{person}{Di Jin}, \bibinfo{person}{Xiangchen Song}, \bibinfo{person}{Zhizhi Yu}, \bibinfo{person}{Ziyang Liu}, \bibinfo{person}{Heling Zhang}, \bibinfo{person}{Zhaomeng Cheng}, {and} \bibinfo{person}{Jiawei Han}.} \bibinfo{year}{2021}\natexlab{}.
\newblock \showarticletitle{Bite-gcn: A new GCN architecture via bidirectional convolution of topology and features on text-rich networks}. In \bibinfo{booktitle}{\emph{Proceedings of the 14th ACM International Conference on Web Search and Data Mining}}. \bibinfo{pages}{157--165}.
\newblock


\bibitem[Jin et~al\mbox{.}(2022a)]%
        {jin2022code}
\bibfield{author}{\bibinfo{person}{Yiqiao Jin}, \bibinfo{person}{Yunsheng Bai}, \bibinfo{person}{Yanqiao Zhu}, \bibinfo{person}{Yizhou Sun}, {and} \bibinfo{person}{Wei Wang}.} \bibinfo{year}{2022}\natexlab{a}.
\newblock \showarticletitle{Code Recommendation for Open Source Software Developers}. In \bibinfo{booktitle}{\emph{Proceedings of the ACM Web Conference 2023}}.
\newblock


\bibitem[Jin et~al\mbox{.}(2023)]%
        {jin2023predicting}
\bibfield{author}{\bibinfo{person}{Yiqiao Jin}, \bibinfo{person}{Yeon-Chang Lee}, \bibinfo{person}{Kartik Sharma}, \bibinfo{person}{Meng Ye}, \bibinfo{person}{Karan Sikka}, \bibinfo{person}{Ajay Divakaran}, {and} \bibinfo{person}{Srijan Kumar}.} \bibinfo{year}{2023}\natexlab{}.
\newblock \showarticletitle{Predicting Information Pathways Across Online Communities}. In \bibinfo{booktitle}{\emph{KDD}}.
\newblock


\bibitem[Jin et~al\mbox{.}(2022b)]%
        {jin2022towards}
\bibfield{author}{\bibinfo{person}{Yiqiao Jin}, \bibinfo{person}{Xiting Wang}, \bibinfo{person}{Ruichao Yang}, \bibinfo{person}{Yizhou Sun}, \bibinfo{person}{Wei Wang}, \bibinfo{person}{Hao Liao}, {and} \bibinfo{person}{Xing Xie}.} \bibinfo{year}{2022}\natexlab{b}.
\newblock \showarticletitle{Towards fine-grained reasoning for fake news detection}. In \bibinfo{booktitle}{\emph{Proceedings of the AAAI Conference on Artificial Intelligence}}, Vol.~\bibinfo{volume}{36}. \bibinfo{pages}{5746--5754}.
\newblock


\bibitem[Kim et~al\mbox{.}(2021)]%
        {kim2021self}
\bibfield{author}{\bibinfo{person}{Taeuk Kim}, \bibinfo{person}{Kang~Min Yoo}, {and} \bibinfo{person}{Sang-goo Lee}.} \bibinfo{year}{2021}\natexlab{}.
\newblock \showarticletitle{Self-guided contrastive learning for BERT sentence representations}.
\newblock \bibinfo{journal}{\emph{arXiv preprint arXiv:2106.07345}} (\bibinfo{year}{2021}).
\newblock


\bibitem[Kingma and Ba(2014)]%
        {kingma2014adam}
\bibfield{author}{\bibinfo{person}{Diederik~P Kingma} {and} \bibinfo{person}{Jimmy Ba}.} \bibinfo{year}{2014}\natexlab{}.
\newblock \showarticletitle{Adam: A method for stochastic optimization}.
\newblock \bibinfo{journal}{\emph{arXiv preprint arXiv:1412.6980}} (\bibinfo{year}{2014}).
\newblock


\bibitem[Kong et~al\mbox{.}(2019)]%
        {kong2019mutual}
\bibfield{author}{\bibinfo{person}{Lingpeng Kong}, \bibinfo{person}{Cyprien de~Masson d'Autume}, \bibinfo{person}{Wang Ling}, \bibinfo{person}{Lei Yu}, \bibinfo{person}{Zihang Dai}, {and} \bibinfo{person}{Dani Yogatama}.} \bibinfo{year}{2019}\natexlab{}.
\newblock \showarticletitle{A mutual information maximization perspective of language representation learning}.
\newblock \bibinfo{journal}{\emph{arXiv preprint arXiv:1910.08350}} (\bibinfo{year}{2019}).
\newblock


\bibitem[Lee et~al\mbox{.}(2021)]%
        {lee2021predicting}
\bibfield{author}{\bibinfo{person}{Jason~D Lee}, \bibinfo{person}{Qi Lei}, \bibinfo{person}{Nikunj Saunshi}, {and} \bibinfo{person}{Jiacheng Zhuo}.} \bibinfo{year}{2021}\natexlab{}.
\newblock \showarticletitle{Predicting what you already know helps: Provable self-supervised learning}.
\newblock \bibinfo{journal}{\emph{Advances in Neural Information Processing Systems}}  \bibinfo{volume}{34} (\bibinfo{year}{2021}), \bibinfo{pages}{309--323}.
\newblock


\bibitem[Li et~al\mbox{.}(2021)]%
        {li2021adsgnn}
\bibfield{author}{\bibinfo{person}{Chaozhuo Li}, \bibinfo{person}{Bochen Pang}, \bibinfo{person}{Yuming Liu}, \bibinfo{person}{Hao Sun}, \bibinfo{person}{Zheng Liu}, \bibinfo{person}{Xing Xie}, \bibinfo{person}{Tianqi Yang}, \bibinfo{person}{Yanling Cui}, \bibinfo{person}{Liangjie Zhang}, {and} \bibinfo{person}{Qi Zhang}.} \bibinfo{year}{2021}\natexlab{}.
\newblock \showarticletitle{Adsgnn: Behavior-graph augmented relevance modeling in sponsored search}. In \bibinfo{booktitle}{\emph{Proceedings of the 44th International ACM SIGIR Conference on Research and Development in Information Retrieval}}. \bibinfo{pages}{223--232}.
\newblock


\bibitem[Li et~al\mbox{.}(2019)]%
        {li2019adversarial}
\bibfield{author}{\bibinfo{person}{Chaozhuo Li}, \bibinfo{person}{Senzhang Wang}, \bibinfo{person}{Yukun Wang}, \bibinfo{person}{Philip Yu}, \bibinfo{person}{Yanbo Liang}, \bibinfo{person}{Yun Liu}, {and} \bibinfo{person}{Zhoujun Li}.} \bibinfo{year}{2019}\natexlab{}.
\newblock \showarticletitle{Adversarial learning for weakly-supervised social network alignment}. In \bibinfo{booktitle}{\emph{Proceedings of the AAAI conference on artificial intelligence}}, Vol.~\bibinfo{volume}{33}. \bibinfo{pages}{996--1003}.
\newblock


\bibitem[Li et~al\mbox{.}(2017)]%
        {li2017ppne}
\bibfield{author}{\bibinfo{person}{Chaozhuo Li}, \bibinfo{person}{Senzhang Wang}, \bibinfo{person}{Dejian Yang}, \bibinfo{person}{Zhoujun Li}, \bibinfo{person}{Yang Yang}, \bibinfo{person}{Xiaoming Zhang}, {and} \bibinfo{person}{Jianshe Zhou}.} \bibinfo{year}{2017}\natexlab{}.
\newblock \showarticletitle{PPNE: property preserving network embedding}. In \bibinfo{booktitle}{\emph{Database Systems for Advanced Applications: 22nd International Conference, DASFAA 2017, Suzhou, China, March 27-30, 2017, Proceedings, Part I 22}}. Springer, \bibinfo{pages}{163--179}.
\newblock


\bibitem[Li et~al\mbox{.}(2020)]%
        {li2020prototypical}
\bibfield{author}{\bibinfo{person}{Junnan Li}, \bibinfo{person}{Pan Zhou}, \bibinfo{person}{Caiming Xiong}, {and} \bibinfo{person}{Steven~CH Hoi}.} \bibinfo{year}{2020}\natexlab{}.
\newblock \showarticletitle{Prototypical contrastive learning of unsupervised representations}.
\newblock \bibinfo{journal}{\emph{arXiv preprint arXiv:2005.04966}} (\bibinfo{year}{2020}).
\newblock


\bibitem[Li et~al\mbox{.}(2018)]%
        {li2018deeper}
\bibfield{author}{\bibinfo{person}{Qimai Li}, \bibinfo{person}{Zhichao Han}, {and} \bibinfo{person}{Xiao-Ming Wu}.} \bibinfo{year}{2018}\natexlab{}.
\newblock \showarticletitle{Deeper insights into graph convolutional networks for semi-supervised learning}. In \bibinfo{booktitle}{\emph{Thirty-Second AAAI conference on artificial intelligence}}.
\newblock


\bibitem[Liu et~al\mbox{.}(2020)]%
        {liu2020towards}
\bibfield{author}{\bibinfo{person}{Meng Liu}, \bibinfo{person}{Hongyang Gao}, {and} \bibinfo{person}{Shuiwang Ji}.} \bibinfo{year}{2020}\natexlab{}.
\newblock \showarticletitle{Towards deeper graph neural networks}. In \bibinfo{booktitle}{\emph{Proceedings of the 26th ACM SIGKDD international conference on knowledge discovery \& data mining}}. \bibinfo{pages}{338--348}.
\newblock


\bibitem[Liu et~al\mbox{.}(2019)]%
        {liu2019fine}
\bibfield{author}{\bibinfo{person}{Zhenghao Liu}, \bibinfo{person}{Chenyan Xiong}, \bibinfo{person}{Maosong Sun}, {and} \bibinfo{person}{Zhiyuan Liu}.} \bibinfo{year}{2019}\natexlab{}.
\newblock \showarticletitle{Fine-grained fact verification with kernel graph attention network}.
\newblock \bibinfo{journal}{\emph{arXiv preprint arXiv:1910.09796}} (\bibinfo{year}{2019}).
\newblock


\bibitem[Long et~al\mbox{.}(2020)]%
        {long2020graph}
\bibfield{author}{\bibinfo{person}{Qingqing Long}, \bibinfo{person}{Yilun Jin}, \bibinfo{person}{Guojie Song}, \bibinfo{person}{Yi Li}, {and} \bibinfo{person}{Wei Lin}.} \bibinfo{year}{2020}\natexlab{}.
\newblock \showarticletitle{Graph structural-topic neural network}. In \bibinfo{booktitle}{\emph{Proceedings of the 26th ACM SIGKDD International Conference on Knowledge Discovery \& Data Mining}}. \bibinfo{pages}{1065--1073}.
\newblock


\bibitem[Long et~al\mbox{.}(2021)]%
        {long2021hgk}
\bibfield{author}{\bibinfo{person}{Qingqing Long}, \bibinfo{person}{Lingjun Xu}, \bibinfo{person}{Zheng Fang}, {and} \bibinfo{person}{Guojie Song}.} \bibinfo{year}{2021}\natexlab{}.
\newblock \showarticletitle{HGK-GNN: Heterogeneous Graph Kernel based Graph Neural Networks}. In \bibinfo{booktitle}{\emph{Proceedings of the 27th ACM SIGKDD Conference on Knowledge Discovery \& Data Mining}}. \bibinfo{pages}{1129--1138}.
\newblock


\bibitem[Lu et~al\mbox{.}(2020)]%
        {lu2020twinbert}
\bibfield{author}{\bibinfo{person}{Wenhao Lu}, \bibinfo{person}{Jian Jiao}, {and} \bibinfo{person}{Ruofei Zhang}.} \bibinfo{year}{2020}\natexlab{}.
\newblock \showarticletitle{Twinbert: Distilling knowledge to twin-structured compressed bert models for large-scale retrieval}. In \bibinfo{booktitle}{\emph{Proceedings of the 29th ACM International Conference on Information \& Knowledge Management}}. \bibinfo{pages}{2645--2652}.
\newblock


\bibitem[McAuley et~al\mbox{.}(2015)]%
        {mcauley2015image}
\bibfield{author}{\bibinfo{person}{Julian McAuley}, \bibinfo{person}{Christopher Targett}, \bibinfo{person}{Qinfeng Shi}, {and} \bibinfo{person}{Anton Van Den~Hengel}.} \bibinfo{year}{2015}\natexlab{}.
\newblock \showarticletitle{Image-based recommendations on styles and substitutes}. In \bibinfo{booktitle}{\emph{Proceedings of the 38th international ACM SIGIR conference on research and development in information retrieval}}. \bibinfo{pages}{43--52}.
\newblock


\bibitem[Oord et~al\mbox{.}(2018)]%
        {oord2018representation}
\bibfield{author}{\bibinfo{person}{Aaron van~den Oord}, \bibinfo{person}{Yazhe Li}, {and} \bibinfo{person}{Oriol Vinyals}.} \bibinfo{year}{2018}\natexlab{}.
\newblock \showarticletitle{Representation learning with contrastive predictive coding}.
\newblock \bibinfo{journal}{\emph{arXiv preprint arXiv:1807.03748}} (\bibinfo{year}{2018}).
\newblock


\bibitem[Pang et~al\mbox{.}(2022)]%
        {pang2022improving}
\bibfield{author}{\bibinfo{person}{Bochen Pang}, \bibinfo{person}{Chaozhuo Li}, \bibinfo{person}{Yuming Liu}, \bibinfo{person}{Jianxun Lian}, \bibinfo{person}{Jianan Zhao}, \bibinfo{person}{Hao Sun}, \bibinfo{person}{Weiwei Deng}, \bibinfo{person}{Xing Xie}, {and} \bibinfo{person}{Qi Zhang}.} \bibinfo{year}{2022}\natexlab{}.
\newblock \showarticletitle{Improving Relevance Modeling via Heterogeneous Behavior Graph Learning in Bing Ads}. In \bibinfo{booktitle}{\emph{Proceedings of the 28th ACM SIGKDD Conference on Knowledge Discovery and Data Mining}}. \bibinfo{pages}{3713--3721}.
\newblock


\bibitem[Peng et~al\mbox{.}(2020)]%
        {peng2020graph}
\bibfield{author}{\bibinfo{person}{Zhen Peng}, \bibinfo{person}{Wenbing Huang}, \bibinfo{person}{Minnan Luo}, \bibinfo{person}{Qinghua Zheng}, \bibinfo{person}{Yu Rong}, \bibinfo{person}{Tingyang Xu}, {and} \bibinfo{person}{Junzhou Huang}.} \bibinfo{year}{2020}\natexlab{}.
\newblock \showarticletitle{Graph representation learning via graphical mutual information maximization}. In \bibinfo{booktitle}{\emph{Proceedings of The Web Conference 2020}}. \bibinfo{pages}{259--270}.
\newblock


\bibitem[Saunshi et~al\mbox{.}(2022)]%
        {saunshi2022understanding}
\bibfield{author}{\bibinfo{person}{Nikunj Saunshi}, \bibinfo{person}{Jordan Ash}, \bibinfo{person}{Surbhi Goel}, \bibinfo{person}{Dipendra Misra}, \bibinfo{person}{Cyril Zhang}, \bibinfo{person}{Sanjeev Arora}, \bibinfo{person}{Sham Kakade}, {and} \bibinfo{person}{Akshay Krishnamurthy}.} \bibinfo{year}{2022}\natexlab{}.
\newblock \showarticletitle{Understanding contrastive learning requires incorporating inductive biases}. In \bibinfo{booktitle}{\emph{International Conference on Machine Learning}}. PMLR, \bibinfo{pages}{19250--19286}.
\newblock


\bibitem[Shuman et~al\mbox{.}(2013)]%
        {shuman2013emerging}
\bibfield{author}{\bibinfo{person}{David~I Shuman}, \bibinfo{person}{Sunil~K Narang}, \bibinfo{person}{Pascal Frossard}, \bibinfo{person}{Antonio Ortega}, {and} \bibinfo{person}{Pierre Vandergheynst}.} \bibinfo{year}{2013}\natexlab{}.
\newblock \showarticletitle{The emerging field of signal processing on graphs: Extending high-dimensional data analysis to networks and other irregular domains}.
\newblock \bibinfo{journal}{\emph{IEEE signal processing magazine}} \bibinfo{volume}{30}, \bibinfo{number}{3} (\bibinfo{year}{2013}), \bibinfo{pages}{83--98}.
\newblock


\bibitem[Tang et~al\mbox{.}(2008)]%
        {tang2008arnetminer}
\bibfield{author}{\bibinfo{person}{Jie Tang}, \bibinfo{person}{Jing Zhang}, \bibinfo{person}{Limin Yao}, \bibinfo{person}{Juanzi Li}, \bibinfo{person}{Li Zhang}, {and} \bibinfo{person}{Zhong Su}.} \bibinfo{year}{2008}\natexlab{}.
\newblock \showarticletitle{Arnetminer: extraction and mining of academic social networks}. In \bibinfo{booktitle}{\emph{Proceedings of the 14th ACM SIGKDD international conference on Knowledge discovery and data mining}}. \bibinfo{pages}{990--998}.
\newblock


\bibitem[Tian(2022)]%
        {tian2022deep}
\bibfield{author}{\bibinfo{person}{Yuandong Tian}.} \bibinfo{year}{2022}\natexlab{}.
\newblock \showarticletitle{Deep contrastive learning is provably (almost) principal component analysis}.
\newblock \bibinfo{journal}{\emph{arXiv preprint arXiv:2201.12680}} (\bibinfo{year}{2022}).
\newblock


\bibitem[Veli{\v{c}}kovi{\'c} et~al\mbox{.}(2017)]%
        {velivckovic2017graph}
\bibfield{author}{\bibinfo{person}{Petar Veli{\v{c}}kovi{\'c}}, \bibinfo{person}{Guillem Cucurull}, \bibinfo{person}{Arantxa Casanova}, \bibinfo{person}{Adriana Romero}, \bibinfo{person}{Pietro Lio}, {and} \bibinfo{person}{Yoshua Bengio}.} \bibinfo{year}{2017}\natexlab{}.
\newblock \showarticletitle{Graph attention networks}.
\newblock \bibinfo{journal}{\emph{arXiv preprint arXiv:1710.10903}} (\bibinfo{year}{2017}).
\newblock


\bibitem[Velickovic et~al\mbox{.}(2019)]%
        {velickovic2019deep}
\bibfield{author}{\bibinfo{person}{Petar Velickovic}, \bibinfo{person}{William Fedus}, \bibinfo{person}{William~L Hamilton}, \bibinfo{person}{Pietro Li{\`o}}, \bibinfo{person}{Yoshua Bengio}, {and} \bibinfo{person}{R~Devon Hjelm}.} \bibinfo{year}{2019}\natexlab{}.
\newblock \showarticletitle{Deep Graph Infomax.}
\newblock \bibinfo{journal}{\emph{ICLR (Poster)}} \bibinfo{volume}{2}, \bibinfo{number}{3} (\bibinfo{year}{2019}), \bibinfo{pages}{4}.
\newblock


\bibitem[Von~Luxburg(2007)]%
        {von2007tutorial}
\bibfield{author}{\bibinfo{person}{Ulrike Von~Luxburg}.} \bibinfo{year}{2007}\natexlab{}.
\newblock \showarticletitle{A tutorial on spectral clustering}.
\newblock \bibinfo{journal}{\emph{Statistics and computing}} \bibinfo{volume}{17}, \bibinfo{number}{4} (\bibinfo{year}{2007}), \bibinfo{pages}{395--416}.
\newblock


\bibitem[Wang et~al\mbox{.}(2016a)]%
        {wang2016text}
\bibfield{author}{\bibinfo{person}{Chenguang Wang}, \bibinfo{person}{Yangqiu Song}, \bibinfo{person}{Haoran Li}, \bibinfo{person}{Ming Zhang}, {and} \bibinfo{person}{Jiawei Han}.} \bibinfo{year}{2016}\natexlab{a}.
\newblock \showarticletitle{Text classification with heterogeneous information network kernels}. In \bibinfo{booktitle}{\emph{Proceedings of the AAAI Conference on Artificial Intelligence}}, Vol.~\bibinfo{volume}{30}.
\newblock


\bibitem[Wang et~al\mbox{.}(2016b)]%
        {wang2016linked}
\bibfield{author}{\bibinfo{person}{Suhang Wang}, \bibinfo{person}{Jiliang Tang}, \bibinfo{person}{Charu Aggarwal}, {and} \bibinfo{person}{Huan Liu}.} \bibinfo{year}{2016}\natexlab{b}.
\newblock \showarticletitle{Linked document embedding for classification}. In \bibinfo{booktitle}{\emph{Proceedings of the 25th ACM international on conference on information and knowledge management}}. \bibinfo{pages}{115--124}.
\newblock


\bibitem[Wang et~al\mbox{.}(2019b)]%
        {wang2019improving}
\bibfield{author}{\bibinfo{person}{Wenlin Wang}, \bibinfo{person}{Chenyang Tao}, \bibinfo{person}{Zhe Gan}, \bibinfo{person}{Guoyin Wang}, \bibinfo{person}{Liqun Chen}, \bibinfo{person}{Xinyuan Zhang}, \bibinfo{person}{Ruiyi Zhang}, \bibinfo{person}{Qian Yang}, \bibinfo{person}{Ricardo Henao}, {and} \bibinfo{person}{Lawrence Carin}.} \bibinfo{year}{2019}\natexlab{b}.
\newblock \showarticletitle{Improving textual network learning with variational homophilic embeddings}.
\newblock \bibinfo{journal}{\emph{Advances in Neural Information Processing Systems}}  \bibinfo{volume}{32} (\bibinfo{year}{2019}).
\newblock


\bibitem[Wang et~al\mbox{.}(2021a)]%
        {wang2021kepler}
\bibfield{author}{\bibinfo{person}{Xiaozhi Wang}, \bibinfo{person}{Tianyu Gao}, \bibinfo{person}{Zhaocheng Zhu}, \bibinfo{person}{Zhengyan Zhang}, \bibinfo{person}{Zhiyuan Liu}, \bibinfo{person}{Juanzi Li}, {and} \bibinfo{person}{Jian Tang}.} \bibinfo{year}{2021}\natexlab{a}.
\newblock \showarticletitle{KEPLER: A unified model for knowledge embedding and pre-trained language representation}.
\newblock \bibinfo{journal}{\emph{Transactions of the Association for Computational Linguistics}}  \bibinfo{volume}{9} (\bibinfo{year}{2021}), \bibinfo{pages}{176--194}.
\newblock


\bibitem[Wang et~al\mbox{.}(2019a)]%
        {wang2019heterogeneous}
\bibfield{author}{\bibinfo{person}{Xiao Wang}, \bibinfo{person}{Houye Ji}, \bibinfo{person}{Chuan Shi}, \bibinfo{person}{Bai Wang}, \bibinfo{person}{Yanfang Ye}, \bibinfo{person}{Peng Cui}, {and} \bibinfo{person}{Philip~S Yu}.} \bibinfo{year}{2019}\natexlab{a}.
\newblock \showarticletitle{Heterogeneous graph attention network}. In \bibinfo{booktitle}{\emph{The world wide web conference}}. \bibinfo{pages}{2022--2032}.
\newblock


\bibitem[Wang et~al\mbox{.}(2021b)]%
        {wang2021self}
\bibfield{author}{\bibinfo{person}{Xiao Wang}, \bibinfo{person}{Nian Liu}, \bibinfo{person}{Hui Han}, {and} \bibinfo{person}{Chuan Shi}.} \bibinfo{year}{2021}\natexlab{b}.
\newblock \showarticletitle{Self-supervised heterogeneous graph neural network with co-contrastive learning}. In \bibinfo{booktitle}{\emph{Proceedings of the 27th ACM SIGKDD conference on knowledge discovery \& data mining}}. \bibinfo{pages}{1726--1736}.
\newblock


\bibitem[Wang et~al\mbox{.}(2022)]%
        {wang2022adaptive}
\bibfield{author}{\bibinfo{person}{Yiqi Wang}, \bibinfo{person}{Chaozhuo Li}, \bibinfo{person}{Zheng Liu}, \bibinfo{person}{Mingzheng Li}, \bibinfo{person}{Jiliang Tang}, \bibinfo{person}{Xing Xie}, \bibinfo{person}{Lei Chen}, {and} \bibinfo{person}{Philip~S Yu}.} \bibinfo{year}{2022}\natexlab{}.
\newblock \showarticletitle{An adaptive graph pre-training framework for localized collaborative filtering}.
\newblock \bibinfo{journal}{\emph{ACM Transactions on Information Systems}} \bibinfo{volume}{41}, \bibinfo{number}{2} (\bibinfo{year}{2022}), \bibinfo{pages}{1--27}.
\newblock


\bibitem[Wen and Li(2021)]%
        {wen2021toward}
\bibfield{author}{\bibinfo{person}{Zixin Wen} {and} \bibinfo{person}{Yuanzhi Li}.} \bibinfo{year}{2021}\natexlab{}.
\newblock \showarticletitle{Toward understanding the feature learning process of self-supervised contrastive learning}. In \bibinfo{booktitle}{\emph{International Conference on Machine Learning}}. PMLR, \bibinfo{pages}{11112--11122}.
\newblock


\bibitem[Xia et~al\mbox{.}(2022)]%
        {xia2022progcl}
\bibfield{author}{\bibinfo{person}{Jun Xia}, \bibinfo{person}{Lirong Wu}, \bibinfo{person}{Ge Wang}, \bibinfo{person}{Jintao Chen}, {and} \bibinfo{person}{Stan~Z Li}.} \bibinfo{year}{2022}\natexlab{}.
\newblock \showarticletitle{Progcl: Rethinking hard negative mining in graph contrastive learning}. In \bibinfo{booktitle}{\emph{International Conference on Machine Learning}}. PMLR, \bibinfo{pages}{24332--24346}.
\newblock


\bibitem[Xu et~al\mbox{.}(2021)]%
        {xu2021self}
\bibfield{author}{\bibinfo{person}{Minghao Xu}, \bibinfo{person}{Hang Wang}, \bibinfo{person}{Bingbing Ni}, \bibinfo{person}{Hongyu Guo}, {and} \bibinfo{person}{Jian Tang}.} \bibinfo{year}{2021}\natexlab{}.
\newblock \showarticletitle{Self-supervised graph-level representation learning with local and global structure}. In \bibinfo{booktitle}{\emph{International Conference on Machine Learning}}. PMLR, \bibinfo{pages}{11548--11558}.
\newblock


\bibitem[Xu et~al\mbox{.}(2019)]%
        {xu2019deep}
\bibfield{author}{\bibinfo{person}{Zenan Xu}, \bibinfo{person}{Qinliang Su}, \bibinfo{person}{Xiaojun Quan}, {and} \bibinfo{person}{Weijia Zhang}.} \bibinfo{year}{2019}\natexlab{}.
\newblock \showarticletitle{A deep neural information fusion architecture for textual network embeddings}.
\newblock \bibinfo{journal}{\emph{arXiv preprint arXiv:1908.11057}} (\bibinfo{year}{2019}).
\newblock


\bibitem[Yan et~al\mbox{.}(2024)]%
        {yan2024comprehensive}
\bibfield{author}{\bibinfo{person}{Hao Yan}, \bibinfo{person}{Chaozhuo Li}, \bibinfo{person}{Ruosong Long}, \bibinfo{person}{Chao Yan}, \bibinfo{person}{Jianan Zhao}, \bibinfo{person}{Wenwen Zhuang}, \bibinfo{person}{Jun Yin}, \bibinfo{person}{Peiyan Zhang}, \bibinfo{person}{Weihao Han}, \bibinfo{person}{Hao Sun}, {et~al\mbox{.}}} \bibinfo{year}{2024}\natexlab{}.
\newblock \showarticletitle{A Comprehensive Study on Text-attributed Graphs: Benchmarking and Rethinking}.
\newblock \bibinfo{journal}{\emph{Advances in Neural Information Processing Systems}}  \bibinfo{volume}{36} (\bibinfo{year}{2024}).
\newblock


\bibitem[Yang et~al\mbox{.}(2015)]%
        {yang2015network}
\bibfield{author}{\bibinfo{person}{Cheng Yang}, \bibinfo{person}{Zhiyuan Liu}, \bibinfo{person}{Deli Zhao}, \bibinfo{person}{Maosong Sun}, {and} \bibinfo{person}{Edward~Y Chang}.} \bibinfo{year}{2015}\natexlab{}.
\newblock \showarticletitle{Network representation learning with rich text information.}. In \bibinfo{booktitle}{\emph{IJCAI}}, Vol.~\bibinfo{volume}{2015}. \bibinfo{pages}{2111--2117}.
\newblock


\bibitem[Yang et~al\mbox{.}(2021)]%
        {yang2021graphformers}
\bibfield{author}{\bibinfo{person}{Junhan Yang}, \bibinfo{person}{Zheng Liu}, \bibinfo{person}{Shitao Xiao}, \bibinfo{person}{Chaozhuo Li}, \bibinfo{person}{Defu Lian}, \bibinfo{person}{Sanjay Agrawal}, \bibinfo{person}{Amit Singh}, \bibinfo{person}{Guangzhong Sun}, {and} \bibinfo{person}{Xing Xie}.} \bibinfo{year}{2021}\natexlab{}.
\newblock \showarticletitle{GraphFormers: GNN-nested transformers for representation learning on textual graph}.
\newblock \bibinfo{journal}{\emph{Advances in Neural Information Processing Systems}}  \bibinfo{volume}{34} (\bibinfo{year}{2021}), \bibinfo{pages}{28798--28810}.
\newblock


\bibitem[Yang et~al\mbox{.}(2022)]%
        {yang2022reinforcement}
\bibfield{author}{\bibinfo{person}{Ruichao Yang}, \bibinfo{person}{Xiting Wang}, \bibinfo{person}{Yiqiao Jin}, \bibinfo{person}{Chaozhuo Li}, \bibinfo{person}{Jianxun Lian}, {and} \bibinfo{person}{Xing Xie}.} \bibinfo{year}{2022}\natexlab{}.
\newblock \showarticletitle{Reinforcement subgraph reasoning for fake news detection}. In \bibinfo{booktitle}{\emph{Proceedings of the 28th ACM SIGKDD Conference on Knowledge Discovery and Data Mining}}. \bibinfo{pages}{2253--2262}.
\newblock


\bibitem[Yasunaga et~al\mbox{.}(2022)]%
        {yasunaga2022linkbert}
\bibfield{author}{\bibinfo{person}{Michihiro Yasunaga}, \bibinfo{person}{Jure Leskovec}, {and} \bibinfo{person}{Percy Liang}.} \bibinfo{year}{2022}\natexlab{}.
\newblock \showarticletitle{LinkBERT: Pretraining Language Models with Document Links}. In \bibinfo{booktitle}{\emph{Proceedings of the 60th Annual Meeting of the Association for Computational Linguistics (Volume 1: Long Papers)}}. \bibinfo{pages}{8003--8016}.
\newblock


\bibitem[Yasunaga et~al\mbox{.}(2017)]%
        {yasunaga2017graph}
\bibfield{author}{\bibinfo{person}{Michihiro Yasunaga}, \bibinfo{person}{Rui Zhang}, \bibinfo{person}{Kshitijh Meelu}, \bibinfo{person}{Ayush Pareek}, \bibinfo{person}{Krishnan Srinivasan}, {and} \bibinfo{person}{Dragomir Radev}.} \bibinfo{year}{2017}\natexlab{}.
\newblock \showarticletitle{Graph-based neural multi-document summarization}.
\newblock \bibinfo{journal}{\emph{arXiv preprint arXiv:1706.06681}} (\bibinfo{year}{2017}).
\newblock


\bibitem[Zhang et~al\mbox{.}(2019a)]%
        {zhang2019heterogeneous}
\bibfield{author}{\bibinfo{person}{Chuxu Zhang}, \bibinfo{person}{Dongjin Song}, \bibinfo{person}{Chao Huang}, \bibinfo{person}{Ananthram Swami}, {and} \bibinfo{person}{Nitesh~V Chawla}.} \bibinfo{year}{2019}\natexlab{a}.
\newblock \showarticletitle{Heterogeneous graph neural network}. In \bibinfo{booktitle}{\emph{Proceedings of the 25th ACM SIGKDD international conference on knowledge discovery \& data mining}}. \bibinfo{pages}{793--803}.
\newblock


\bibitem[Zhang et~al\mbox{.}(2019b)]%
        {zhang2019shne}
\bibfield{author}{\bibinfo{person}{Chuxu Zhang}, \bibinfo{person}{Ananthram Swami}, {and} \bibinfo{person}{Nitesh~V Chawla}.} \bibinfo{year}{2019}\natexlab{b}.
\newblock \showarticletitle{Shne: Representation learning for semantic-associated heterogeneous networks}. In \bibinfo{booktitle}{\emph{Proceedings of the twelfth ACM international conference on web search and data mining}}. \bibinfo{pages}{690--698}.
\newblock


\bibitem[Zhang et~al\mbox{.}(2016)]%
        {zhang2016geoburst}
\bibfield{author}{\bibinfo{person}{Chao Zhang}, \bibinfo{person}{Guangyu Zhou}, \bibinfo{person}{Quan Yuan}, \bibinfo{person}{Honglei Zhuang}, \bibinfo{person}{Yu Zheng}, \bibinfo{person}{Lance Kaplan}, \bibinfo{person}{Shaowen Wang}, {and} \bibinfo{person}{Jiawei Han}.} \bibinfo{year}{2016}\natexlab{}.
\newblock \showarticletitle{Geoburst: Real-time local event detection in geo-tagged tweet streams}. In \bibinfo{booktitle}{\emph{Proceedings of the 39th International ACM SIGIR conference on Research and Development in Information Retrieval}}. \bibinfo{pages}{513--522}.
\newblock


\bibitem[Zhang et~al\mbox{.}(2017)]%
        {zhang2017mixup}
\bibfield{author}{\bibinfo{person}{Hongyi Zhang}, \bibinfo{person}{Moustapha Cisse}, \bibinfo{person}{Yann~N Dauphin}, {and} \bibinfo{person}{David Lopez-Paz}.} \bibinfo{year}{2017}\natexlab{}.
\newblock \showarticletitle{mixup: Beyond empirical risk minimization}.
\newblock \bibinfo{journal}{\emph{arXiv preprint arXiv:1710.09412}} (\bibinfo{year}{2017}).
\newblock


\bibitem[Zhang et~al\mbox{.}(2020)]%
        {zhang2020graph}
\bibfield{author}{\bibinfo{person}{Jiawei Zhang}, \bibinfo{person}{Haopeng Zhang}, \bibinfo{person}{Congying Xia}, {and} \bibinfo{person}{Li Sun}.} \bibinfo{year}{2020}\natexlab{}.
\newblock \showarticletitle{Graph-bert: Only attention is needed for learning graph representations}.
\newblock \bibinfo{journal}{\emph{arXiv preprint arXiv:2001.05140}} (\bibinfo{year}{2020}).
\newblock


\bibitem[Zhao et~al\mbox{.}(2022)]%
        {zhao2022learning}
\bibfield{author}{\bibinfo{person}{Jianan Zhao}, \bibinfo{person}{Meng Qu}, \bibinfo{person}{Chaozhuo Li}, \bibinfo{person}{Hao Yan}, \bibinfo{person}{Qian Liu}, \bibinfo{person}{Rui Li}, \bibinfo{person}{Xing Xie}, {and} \bibinfo{person}{Jian Tang}.} \bibinfo{year}{2022}\natexlab{}.
\newblock \showarticletitle{Learning on Large-scale Text-attributed Graphs via Variational Inference}.
\newblock \bibinfo{journal}{\emph{arXiv preprint arXiv:2210.14709}} (\bibinfo{year}{2022}).
\newblock


\bibitem[Zhao et~al\mbox{.}(2023)]%
        {zhao2023beyond}
\bibfield{author}{\bibinfo{person}{Yi Zhao}, \bibinfo{person}{Chaozhuo Li}, \bibinfo{person}{Jiquan Peng}, \bibinfo{person}{Xiaohan Fang}, \bibinfo{person}{Feiran Huang}, \bibinfo{person}{Senzhang Wang}, \bibinfo{person}{Xing Xie}, {and} \bibinfo{person}{Jibing Gong}.} \bibinfo{year}{2023}\natexlab{}.
\newblock \showarticletitle{Beyond the Overlapping Users: Cross-Domain Recommendation via Adaptive Anchor Link Learning}. In \bibinfo{booktitle}{\emph{Proceedings of the 46th International ACM SIGIR Conference on Research and Development in Information Retrieval}}. \bibinfo{pages}{1488--1497}.
\newblock


\bibitem[Zhou et~al\mbox{.}(2020)]%
        {zhou2020s3}
\bibfield{author}{\bibinfo{person}{Kun Zhou}, \bibinfo{person}{Hui Wang}, \bibinfo{person}{Wayne~Xin Zhao}, \bibinfo{person}{Yutao Zhu}, \bibinfo{person}{Sirui Wang}, \bibinfo{person}{Fuzheng Zhang}, \bibinfo{person}{Zhongyuan Wang}, {and} \bibinfo{person}{Ji-Rong Wen}.} \bibinfo{year}{2020}\natexlab{}.
\newblock \showarticletitle{S3-rec: Self-supervised learning for sequential recommendation with mutual information maximization}. In \bibinfo{booktitle}{\emph{Proceedings of the 29th ACM international conference on information \& knowledge management}}. \bibinfo{pages}{1893--1902}.
\newblock


\bibitem[Zhu et~al\mbox{.}(2021)]%
        {zhu2021textgnn}
\bibfield{author}{\bibinfo{person}{Jason Zhu}, \bibinfo{person}{Yanling Cui}, \bibinfo{person}{Yuming Liu}, \bibinfo{person}{Hao Sun}, \bibinfo{person}{Xue Li}, \bibinfo{person}{Markus Pelger}, \bibinfo{person}{Tianqi Yang}, \bibinfo{person}{Liangjie Zhang}, \bibinfo{person}{Ruofei Zhang}, {and} \bibinfo{person}{Huasha Zhao}.} \bibinfo{year}{2021}\natexlab{}.
\newblock \showarticletitle{Textgnn: Improving text encoder via graph neural network in sponsored search}. In \bibinfo{booktitle}{\emph{Proceedings of the Web Conference 2021}}. \bibinfo{pages}{2848--2857}.
\newblock


\end{thebibliography}
% \clearpage
\appendix
\section{Theoretical Analysis of HFC}
\subsection{Background: Spectral Clustering}
\label{sec:spectral}
Given a graph $\mathcal{G}=(\mathcal{V},\mathcal{E})$, with adjacency matrix $A$, the Laplacian matrix of the graph is defined as $L=D - A$, where $D = diag(d_{1},...,d_{N})$ is the diagonal degree matrix ($d_{i}=\Sigma_{j}A_{i,j}$). Then the symmetric normalized Laplacian matrix is defined as $L_{sym}=D^{-\frac{1}{2}}LD^{-\frac{1}{2}}$. As $L_{sym}$ is real symmetric and positive semidefinite, therefore it can be diagonalized as $L=U\Lambda U^{T}$~\citep{chung1997spectral}. Here $U\in \mathbb{R}^{N\times N}=[u_{1},...,u_{N}]$, where $u_{i}\in \mathbb{R}^{N}$ denotes the $i$-th eigenvector of $L_{sym}$ and $\Lambda=diag(\lambda_{1},...,\lambda_{N})$ is the corresponding eigenvalue matrix. To partition the graph, spectral clustering~\citep{hastie2009elements,von2007tutorial} computes the first K eigenvectors and creates a feature vector $f_{K,v}\in \mathbb{R}^{K}$ for each node $v: \forall k \in [1,K], f_{K,v}(k)=u_{k}(v)$, which is in turn used to obtain K clusters by K-means or hierarchical clustering, etc.

An analogy between signals on graphs and usual signals~\citep{shuman2013emerging} suggests to interpret the spectrum of $L_{sym}$ as a Fourier domain for graphs, hence defining filters on graphs as diagonal operators after change of basis with $U^{-1}$. It turns out that the features $f_{K,v}$ can be
obtained by ideal low-pass filtering of the Delta function $\delta_{a}$ (localized at node a). Indeed, let $l_{K}$ be the step function where $l_{K}(\lambda)=1$ if $\lambda <\lambda_{K}$ and 0 otherwise. We define $L_{K}$ the diagonal matrix for which $L_{K}(i,i)=l_{K}(\lambda_{i})$. Then we have:
$f_{K,v}=L_{K}U^{-1}\delta_{v}\in \mathbb{R}^{K}$, where we fill the last $N-K$ values with 0’s. Therefore, spectral clustering is equivalent to clustering using low-pass filtering of the local descriptors $\delta_{v}$ of each node $v$ of the graph $\mathcal{G}$.

\subsection{Spectral Contrastive Loss Revisited}
\label{sec:scon}
To introduce spectral contrastive loss~\citep{haochen2021provable}, we give the definition of population view graph~\citep{haochen2021provable} first.

\noindent \textbf{Population View Graph.} A population view graph is defined as $\mathcal{G}=(\mathcal{X},\mathcal{W})$, where the set of nodes comprises all augmented views $\mathcal{X}$ of the population distribution, with $w_{xx^{'}}\in \mathcal{W}$ the edge weights of the edges connecting nodes $x,x^{'}$ that correspond to different views of the same input datapoint. The core assumption made is that this graph cannot be split into a large number of disconnected subgraphs. This set-up aligns well with the intuition that in order to generalize, the contrastive notion of “similarity” must extent beyond the purely single-instance-level, and must somehow connect distinct inputs points.

\noindent \textbf{Spectral Contrastive Loss.} Using the concept of population view graph, spectral contrastive loss is defined as:
\begin{equation}
\begin{aligned}
    \mathcal{L}(x,x^{+},x^{-},f_{\theta}) &= -2\cdot \mathbb{E}_{x,x^{+}}[f_{\theta}(x)^{T}f_{\theta}(x^{+})]\\
    &\ +\mathbb{E}_{x,x^{-}}[(f_{\theta}(x)^{T}f_{\theta}(x^{-}))^{2}],
\end{aligned}
\end{equation}
where $(x,x^{+})$ is a pair of views of the same datapoint, $(x,x^{-})$ is a pair of independently random views, and $f_{\theta}$ is a parameterized function from the data to $\mathbb{R}^{k}$. Minimizing spectral contrastive loss is equivalent to spectral clustering on the population view graph, where the top smallest eigenvectors of the Laplacian matrix are preserved as the columns of the final embedding matrix $F$. 

\subsection{HFC-aware Spectral Contrastive Loss}
As discussed in Appendix~\ref{sec:scon}, the spectral contrastive loss only learns the low-frequency component (LFC) of the graph from a spectral perspective, where the effects of high-frequency components (HFC) are much more attenuated. Recent studies have indicated that the LFC does not necessarily contain the most crucial information; while HFC may also encode useful information that is beneficial for the performance~\citep{bo2021beyond,chen2019drop}. 
% Moreover, LFC eventually leads to the over-smoothing problem~\citep{cai2020note,chen2020measuring,li2018deeper,liu2020towards} especially when the network exhibits a heterogeneous characteristic, where node representations converge to similar values, thus nodes cannot be easily distinguished. 
In this regard, merely using the spectral contrastive loss cannot adequately capture the varying significance of different frequency components, thus constraining the expressiveness of learned representations and producing suboptimal learning performance. How to incorporate the HFC to learn a more discriminative embedding still requires explorations.

In image signal processing, the Laplacian kernel is widely used to capture high-frequency edge information~\citep{he2012guided}. As its counterpart in
Graph Signal Processing (GSP)~\citep{shuman2013emerging}, we can multiply the graph Laplacian matrix $L$ with the input graph signal $x\in \mathbb{R}^{N}$, (\textit{i.e.,} $h=Lx$) to characterize its high-frequency components, which carry sharply varying signal information across edges of graph. On the contrary, when highlighting the LFC, we would subtract the term $Lx$ which emphasizes more on HFC from the input signal $x$, \textit{i.e.,} $z=x - Lx$. 

% It should be noted that the above operation corresponds to a fixed low-pass filter in the spectral domain, where higher weights are specified for LFC. 
% % However, in practice, HFC can also provide complementary insights for learning~\citep{bo2021beyond,chen2019drop}, especially when the label information is not smooth across edges. Additionally, the HFC of the input graph signal would be unavoidably too much weakened with fixed filters, leading to the well-known over-smoothing problem~\citep{li2018deeper}. 
% As discussed in Appendix~\ref{sec:spectral}, spectral clustering is equivalent to clustering using a low-pass filter on each node of the graph. Henceforth, the feature vectors learned by the spectral contrastive loss is LFC of the population view graph. In this regard, the fixed low-pass filters largely limit the fitting capability of contrastive learning and its variants for learning discriminative node representations. As a consequence, it is vital to capture the varying importance of frequencies in the filter to preserve more useful information and alleviate over-smoothing issues. 

As an alternative of the traditional low-pass filter, a simple and elegant solution to introduce HFC is to assign a single parameter to control the rate of high-frequency substraction. 
\begin{equation}
    z=x-\alpha Lx=(I-\alpha L)x\nonumber,
\end{equation}
where $I$ is the identity matrix. We thus obtain the kernel $I-\alpha L$ that contains HFC.

Following~\citep{haochen2021provable}, we consider the following matrix factorization based objective for eigenvectors:
\begin{equation}
\begin{aligned}
    \min_{F\in \mathbb{R}^{N\times K}}\mathcal{L}_{mf}(F) &=||(I-\alpha L)-FF^{T}||_{F}^{2}\\
    &=((1-\alpha)I+\Sigma_{i,j}(\frac{\alpha w_{x_{i},x_{j}}}{\sqrt{w_{x_{i}}}\sqrt{w_{x_{j}}}}-f_{\theta}(x_{i})^{T}f_{\theta}(x_{j})))^{2},
    \end{aligned}
\end{equation}
where $w_{x}=\Sigma_{x^{'}\in \mathcal{X}}w_{xx^{'}}$ is the total weights associated to view $x$. By the classical low-rank approximation theory (Eckart-Young-Mirsky theorem~\citep{eckart1936approximation}), minimizer $F$ possesses eigenvectors of HFC-aware kernel $I-\alpha L$ as columns and thus contains both the LFC and HFC of the population view graph.

\begin{lemma}
(HFC-aware spectral contrastive loss.) Denote $p_{x}$ is the $x$-th row of $F$. Let $p_{x}=w_{x}^{1/2}f_{\theta}(x)$. Then, the loss function $\mathcal{L}_{mf}(F)$ is equivalent to the following loss function for $f_{\theta}$, called HFC-aware spectral contrastive loss, up to an additive constant:
\begin{equation}
     \mathcal{L}_{mf}(F)=\mathcal{L}_{HFC}(f_{\theta})+const\nonumber, 
\end{equation}
where 
\begin{equation}
\begin{aligned}
    \mathcal{L}_{HFC}(f_{\theta}) &= -2\alpha \mathbb{E}_{x,x^{+}}[f_{\theta}(x)^{T}f_{\theta}(x^{+})]\\
    &\ \ \ \ +\mathbb{E}_{x,x^{-}}[(f_{\theta}(x)^{T}f_{\theta}(x^{-}))^{2}]
    \end{aligned}
\end{equation}

\end{lemma}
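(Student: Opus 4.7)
The plan is to expand the Frobenius norm $\|(I-\alpha L)-FF^T\|_F^2$ via the identity $\|A-B\|_F^2=\|A\|_F^2-2\,\mathrm{tr}(A^TB)+\|B\|_F^2$, and then convert each resulting pairwise sum over views $x,x'$ into an expectation under either the positive-pair distribution (with weights proportional to $w_{xx'}$) or the negative-pair distribution (with weights proportional to $w_x w_{x'}$). The substitution $p_x = w_x^{1/2}\,f_\theta(x)$ is the critical bridge: it turns entries of $FF^T$ into $\sqrt{w_x w_{x'}}\,f_\theta(x)^T f_\theta(x')$, which cancels the $1/\sqrt{w_x w_{x'}}$ factor in the normalized adjacency $\bar{A}_{xx'}=w_{xx'}/\sqrt{w_x w_{x'}}$ and leaves plain weights $w_{xx'}$ that can be read directly as a positive-pair expectation.

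Concretely, I first decompose $I-\alpha L = (1-\alpha)I + \alpha\bar{A}$, where $\bar{A}=D^{-1/2}W D^{-1/2}$ is the symmetrically normalized adjacency of the population view graph. The $\|I-\alpha L\|_F^2$ piece is independent of $F$ and is absorbed into the additive constant. The cross term $-2\,\mathrm{tr}\bigl((I-\alpha L)FF^T\bigr)$ splits into an $\alpha\bar{A}$ contribution and a $(1-\alpha)I$ contribution. The $\alpha\bar{A}$ contribution yields $-2\alpha\sum_{x,x'} w_{xx'}\,f_\theta(x)^T f_\theta(x') = -2\alpha\,\mathbb{E}_{x,x^+}\!\bigl[f_\theta(x)^T f_\theta(x^+)\bigr]$, which is precisely the first term of $\mathcal{L}_{HFC}$. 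The quadratic term $\|FF^T\|_F^2=\sum_{x,x'}(p_x^T p_{x'})^2$ becomes $\sum_{x,x'} w_x w_{x'}\,(f_\theta(x)^T f_\theta(x'))^2 = \mathbb{E}_{x,x^-}\!\bigl[(f_\theta(x)^T f_\theta(x^-))^2\bigr]$, matching the second term.

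The main obstacle is the leftover $(1-\alpha)I$ piece of the cross term, which contributes $-2(1-\alpha)\sum_x w_x\,\|f_\theta(x)\|^2 = -2(1-\alpha)\,\mathbb{E}_x\bigl[\|f_\theta(x)\|^2\bigr]$. This is not, at first sight, a constant in $f_\theta$, and any careful presentation must explain why it can be folded into the $\mathrm{const}$. I would handle this with the standard contrastive-learning convention that $f_\theta$ is $\ell_2$-normalized to the unit sphere, so $\|f_\theta(x)\|^2\equiv 1$ and the piece collapses to the constant $-2(1-\alpha)$. Without that convention, the same piece is a purely pointwise regularizer with no pairwise coupling and, as in HaoChen et al., can be treated as an innocuous additive shift under a mild boundedness assumption on the embedding norms. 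Combining the three contributions then yields $\mathcal{L}_{mf}(F)=\mathcal{L}_{HFC}(f_\theta)+\mathrm{const}$.

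As a sanity check I would verify the limit $\alpha=1$: the kernel reduces to $\bar{A}$, the obstacle piece vanishes identically, and the derivation collapses to HaoChen et al.'s original equivalence between $\mathcal{L}_{Spectral}$ and matrix factorization of $\bar{A}$. This confirms that $\alpha$ enters the proof only as a rescaling of the positive-pair coefficient, which is exactly the mechanism by which $\mathcal{L}_{HFC}$ provides the promised control of the HFC rate while remaining spectrally consistent with $\mathcal{L}_{Spectral}$.
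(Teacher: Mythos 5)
Your proposal is correct and follows essentially the same route as the paper's proof: expand $\|(I-\alpha L)-FF^{T}\|_{F}^{2}$, absorb the $F$-independent piece into the constant, and use $p_{x}=w_{x}^{1/2}f_{\theta}(x)$ to read off the $-2\alpha\,\mathbb{E}_{x,x^{+}}$ and $\mathbb{E}_{x,x^{-}}$ terms from the $\alpha\bar{A}$ cross term and the $\|FF^{T}\|_{F}^{2}$ term respectively. The one place you go beyond the paper is the leftover $(1-\alpha)I$ contribution $-2(1-\alpha)\,\mathbb{E}_{x}[\|f_{\theta}(x)\|^{2}]$: the paper disposes of it by simply restricting to the $i\neq j$ case, whereas you correctly flag that it is not constant in $f_{\theta}$ without a unit-norm (or boundedness) convention --- your treatment is the more honest one.
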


\begin{proof}
We expand $\mathcal{L}_{mf}(F)$ and obtain

\begin{equation}
\label{eq:all}
\begin{aligned}
    \mathcal{L}_{mf}(F)
    &=((1-\alpha)I+\Sigma_{i,j}(\frac{\alpha w_{x_{i},x_{j}}}{\sqrt{w_{x_{i}}}\sqrt{w_{x_{j}}}}-f_{\theta}(x_{i})^{T}f_{\theta}(x_{j})))^{2}   \\
    &=const - 2\Sigma_{i,j}[(1-\alpha)I+\frac{\alpha w_{x_{i},x_{j}}}{\sqrt{w_{x_{i}}}\sqrt{w_{x_{j}}}}]f_{\theta}(x_{i})^{T}f_{\theta}(x_{j})\\
    &\ \ \ \ +\Sigma_{i,j}(f_{\theta}(x_{i})^{T}f_{\theta}(x_{j}))^{2} \\
    &= 
    \left\{  
\begin{array}{l}
    const - 2\Sigma_{i,j}1-\alpha+\frac{\alpha w_{x_{i},x_{j}}}{\sqrt{w_{x_{i}}}\sqrt{w_{x_{j}}}}f_{\theta}(x_{i})^{T}f_{\theta}(x_{j})\\
    \ \ +\Sigma_{i,j}(f_{\theta}(x_{i})^{T}f_{\theta}(x_{j}))^{2}, i=j \\
    const - 2\Sigma_{i,j}\frac{\alpha w_{x_{i},x_{j}}}{\sqrt{w_{x_{i}}}\sqrt{w_{x_{j}}}}f_{\theta}(x_{i})^{T}f_{\theta}(x_{j})\\
    \ \ +\Sigma_{i,j}(f_{\theta}(x_{i})^{T}f_{\theta}(x_{j}))^{2}, i\ne j 
\end{array}
\right.  
\end{aligned}
\end{equation}
In our case two views $x_{i}$ and $x_{j}$ are not the same. We thus only focus on the $i\ne j$ case. Ignoring the scaling factor which doesn’t affect linear probe error, we can hence rewrite
the sum of last two terms of in Equation~\ref{eq:all} as Equation~\ref{eq:hfc}.
\end{proof}

\section{Node Classification}
\label{sec:node_classification}
\textbf{Settings.} In node classification, we train a 2-layer MLP classifier to classify nodes based on the output node representation embeddings of each method. The experiment is conducted on DBLP. Following~\citep{jin2022heterformer}, we select the most frequent 30 classes in DBLP. Also, we study both transductive and inductive node classification to understand the capability of our model comprehensively. For transductive node classification, the model has seen the classified nodes during representation learning (using the link prediction objective), while for inductive node classification, the model needs to predict the label of nodes not seen before. We separate the whole dataset into train set, validation set, and test set in 7:1:2 in all cases and each experiment is repeated 5 times in this section with the average performance reported.

\noindent \textbf{Results.} Table~\ref{tab:classification} demonstrates the results of different methods in transductive and inductive node classification. We observe that: (a) our HASH-CODE outperforms all the baseline methods significantly on both tasks, showing that HASH-CODE can learn more effective node representations for these tasks; (b) GNN-nested transformers generally achieve better results than GNN-cascaded transformers, which demonstrates the necessity of introducing graphic patterns in modeling textual representations; (c) HASH-CODE generalizes quite well on unseen nodes as its performance on inductive node classification is quite close to that on transductive node classification. Moreover, HASH-CODE even achieves higher performance in inductive settings than the baselines do in transductive settings.

\begin{table}[h]
% \Large
  \caption{Experiment results of transductive and inductive node classification on DBLP dataset. (HASH-CODE marked in bold, the best baseline underlined). HASH-CODE outperforms all baselines, especially the ones based on GNN-nested transformers.}
  \label{tab:classification}
  {
  \begin{tabular}{c|cc|cc}
    \toprule
    \multirow{2}{*}{Model} & \multicolumn{2}{c}{Transductive} & \multicolumn{2}{c}{Inductive}  \\
               ~&P@1 & NDCG  &P@1 & NDCG \\
    \midrule
    MeanSAGE & $0.5186$ & $0.7231$ & $0.5152$ & $0.7197$  \\
    GAT & $0.5208$ & $0.7196$ & $0.5126$ & $0.7146$  \\
    Bert & $0.5493$ & $0.7506$ & $0.5310$ & $0.7485$  \\
    Twin-Bert & $0.5291$ & $0.7440$ & $0.5248$ & $0.7431$  \\
    \midrule
    Bert+MeanSAGE & $0.6731$ & $0.7637$ & $0.6413$ & $0.7494$   \\
    Bert+MaxSAGE & $0.6705$ & $0.7752$ & $0.6587$ & $0.7599$   \\
    Bert+GAT & $0.6849$ & $0.0.7801$ & $0.6689$ & $0.0.7619$   \\
    \midrule
    TextGNN & $0.6820$ & $0.7753$ & $0.6380$ & $0.7716$   \\
    AdsGNN & $0.6882$ & $0.7790$ & $0.6624$ & $0.7737$   \\
    GraphFormers & $0.6919$ & $0.7929$ & $\underline{0.6791}$ & $0.7993$   \\
    Heterformer & $\underline{0.6924}$ & $\underline{0.7957}$ & $0.6746$ & $\underline{0.8079}$   \\
    \midrule
    HASH-CODE & $\textbf{0.7116}$ & $\textbf{0.8198}$ & $\textbf{0.6961}$ &  $\textbf{0.8170}$    \\
    \midrule
    \textit{Improv.} & $2.77\%$ & $3.03\%$ & $2.50\%$ &  $1.13\%$    \\
    \bottomrule
  \end{tabular}
  }
  % \vspace{-0.3cm}
\end{table}

\section{In-depth Analysis}
\subsection{Data Sparsity Analysis}
\label{sec:sparsity}
% Conventional representation learning methods require a considerable amount of training data, thus they are likely to suffer from the data sparsity issues in real-world applications. This problem can be alleviated by our method because the proposed contrastive learning approach can better utilize the data correlation from input. 
We
simulate the data sparsity scenarios by using different proportions of the full dataset.
Figure~\ref{fig:sparsity} shows the evaluation results on Product and Sports datasets. As
we can see, the performance substantially drops when less training data is used. While, HASH-CODE is consistently better than baselines in all cases, especially in an extreme sparsity level (20\%). This observation implies that HASH-CODE is able to make better use of the data with the contrastive learning method, which alleviates the influence of data
sparsity problem for representation learning to some extent.

\begin{figure}[h]
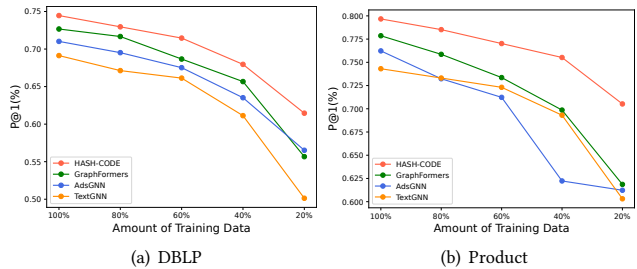

\subfigure[DBLP]{\centering
    \includegraphics[width=0.48\linewidth]{figures/DBLP_Data_Amount.pdf}
    }
\subfigure[Product]{\centering
    \includegraphics[width=0.48\linewidth]{figures/Product_Data_Amount.pdf}
    }
    \caption{Performance (P@1) comparison w.r.t. different sparsity levels on DBLP and Product datasets. The performance substantially drops when less training data is used, while  HASH-CODE is consistently better than baselines in all
cases, especially in an extreme sparsity level (20\%).}
    \label{fig:sparsity}
    % \vspace{-0.3cm}
\end{figure}

% \begin{figure*}[h]
% \centering
% \subfigure[Product]{
% \includegraphics[width=0.25\textwidth]{figures/Product_Data_Amount.pdf}
% \label{figure:prodcut_sparsity}
% }
% \subfigure[Beauty]{
% \includegraphics[width=0.25\textwidth]{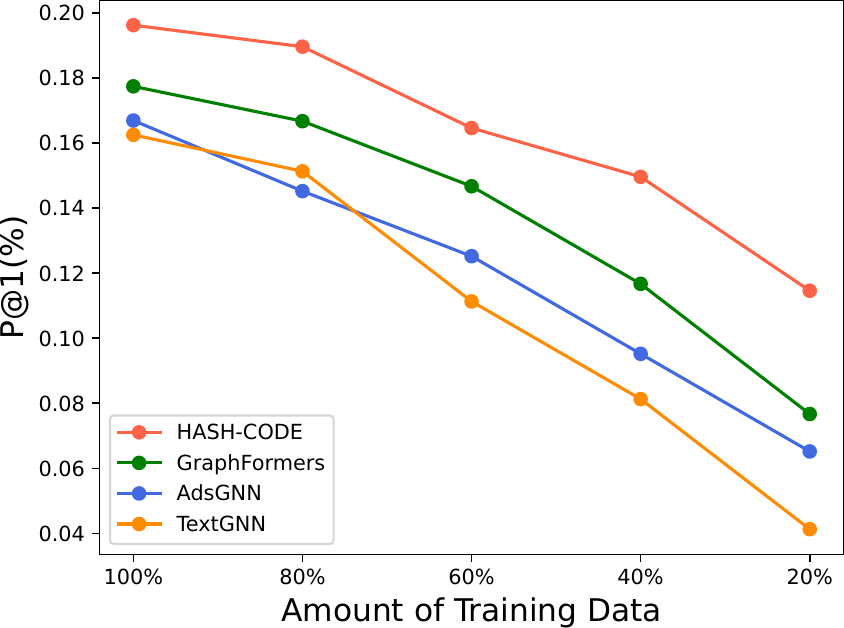}
% \label{figure:beauty_sparsity}
% }
% \subfigure[Sports]{
% \includegraphics[width=0.25\textwidth]{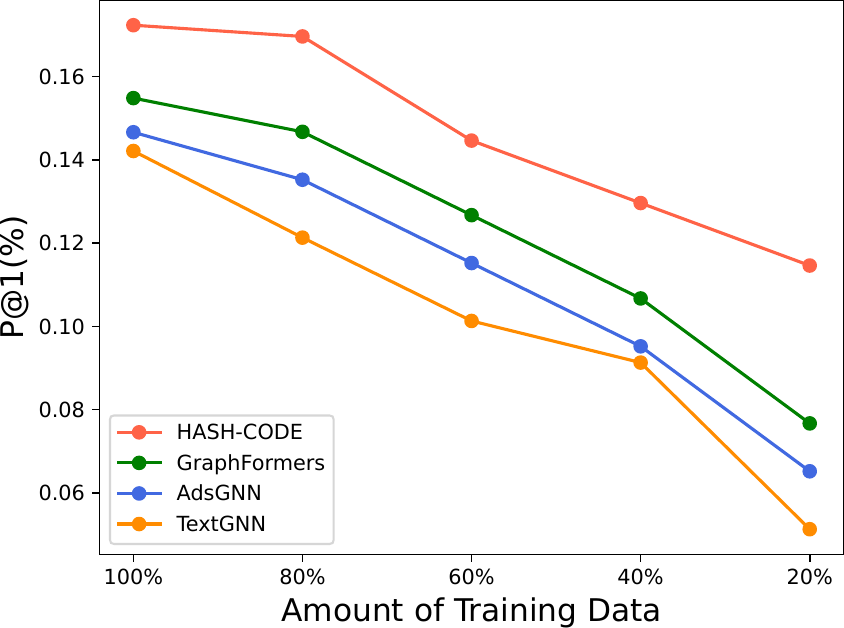}
% \label{figure:sports_sparsity}
% }

% \subfigure[Toys]{
% \includegraphics[width=0.25\textwidth]{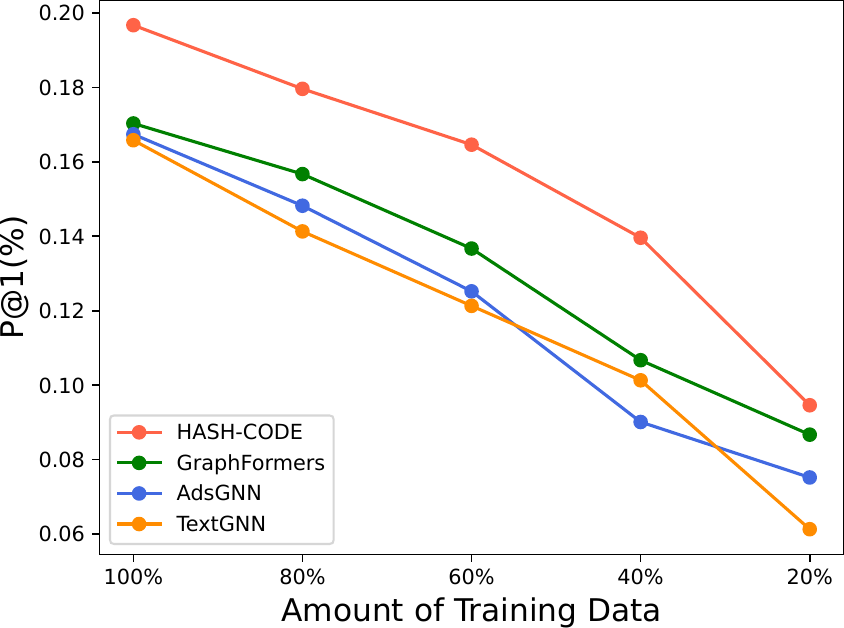}
% \label{figure:toys_sparsity}
% }
% \subfigure[DBLP]{
% \includegraphics[width=0.25\textwidth]{figures/DBLP_Data_Amount.pdf}
% \label{figure:dblp_sparsity}
% }
% \subfigure[Wiki]{
% \includegraphics[width=0.25\textwidth]{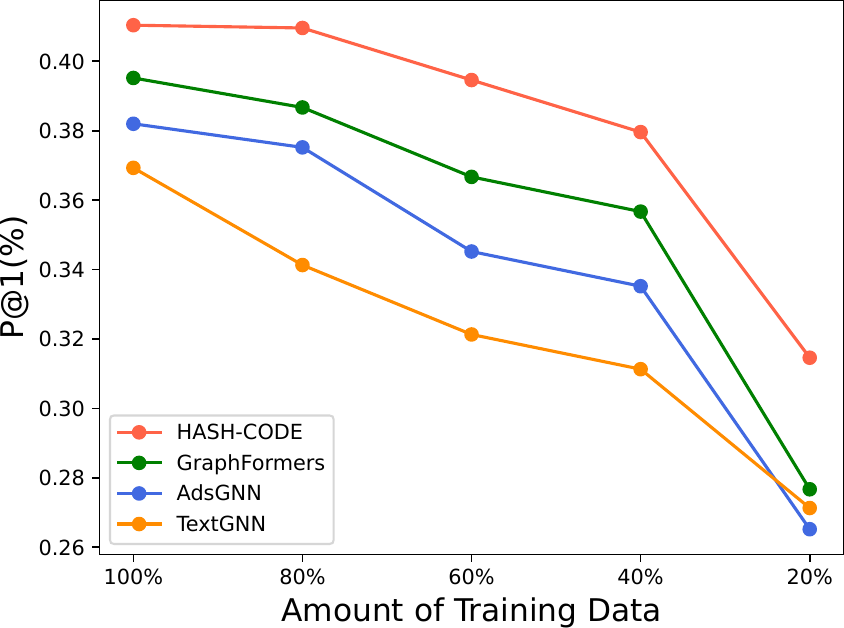}
% \label{figure:wiki_sparsity}
% }
% \caption{Performance (P@1) comparison w.r.t. different sparsity levels on DBLP and Product datasets. The performance substantially drops when less training data is used, while  HASH-CODE is consistently better than baselines in all
% cases, especially in an extreme sparsity level (20\%).}
% \label{fig:sparsity}
% \end{figure*}

\subsection{Influence of Training Epochs Number}
\label{sec:epochs}
% Our approach consists of co-training with GNNs and Transformers. During the training stage, our model can learn the enhanced representations of the attribute and node for the representation learning task. The number of training epochs will affect the performance of the downstream task. To
% investigate this, 

We train our model with a varying number of epochs and fine-tune it on the downstream task.
Figure~\ref{fig:epoch} presents the results on Product and Sports datasets. We can see that our model benefits mostly from the first 20 training epochs. And after that, the performance improves slightly. Based on this observation, we can conclude that the correlations among different views (i.e., the graph topology and textual attributes) can be well-captured by our contrastive learning
approach through training within a small number of epochs. So that the enhanced data representations can improve the performance of the downstream tasks.

\begin{figure}[h]
\subfigure[DBLP]{\centering
    \includegraphics[width=0.48\linewidth]{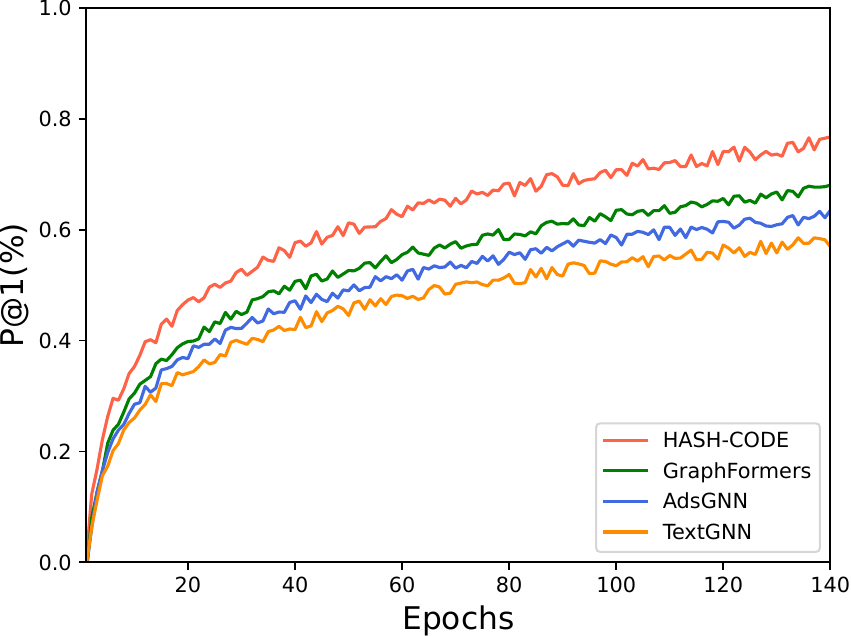}
    }
\subfigure[Product]{\centering
    \includegraphics[width=0.48\linewidth]{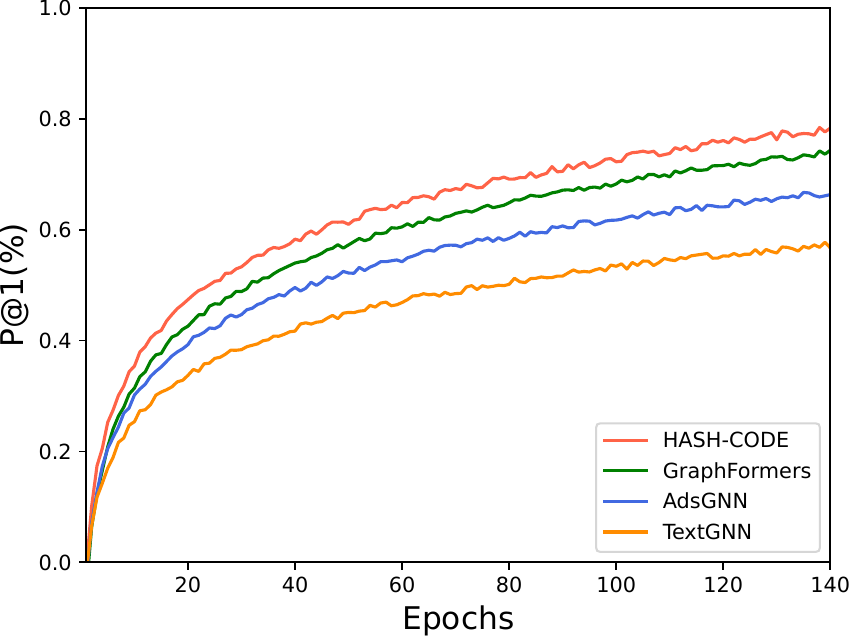}
    }
    \caption{Performance (P@1) comparison w.r.t. different numbers of training epochs on DBLP and Product datasets. HASH-CODE benefits mostly from the first 20 training epochs, thus  the correlations among different views can be well-captured by our approach through training within a small number of epochs.}
    \label{fig:epoch}
    % \vspace{-0.3cm}
\end{figure}

\begin{figure}[h]
\subfigure[DBLP]{\centering
    \includegraphics[width=0.48\linewidth]{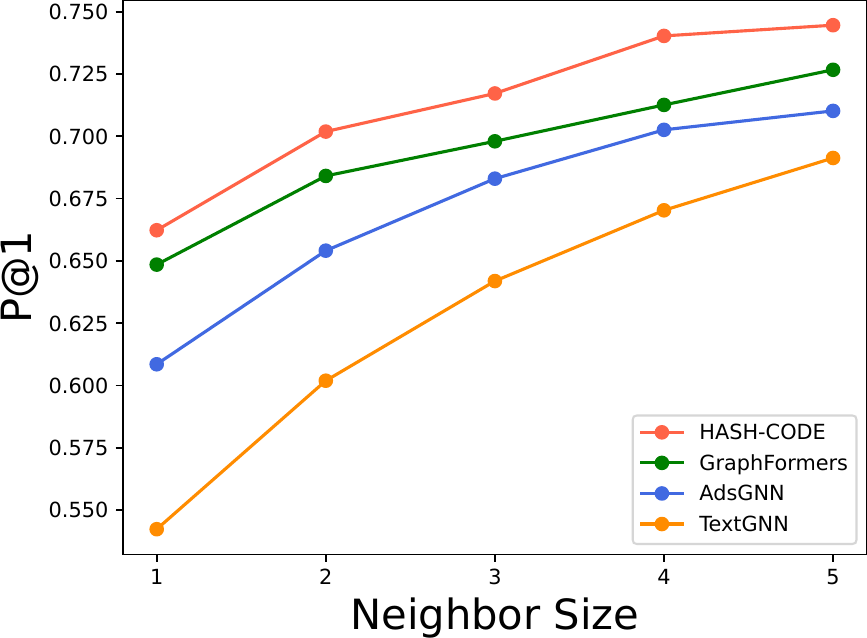}
    }
\subfigure[Product]{\centering
    \includegraphics[width=0.48\linewidth]{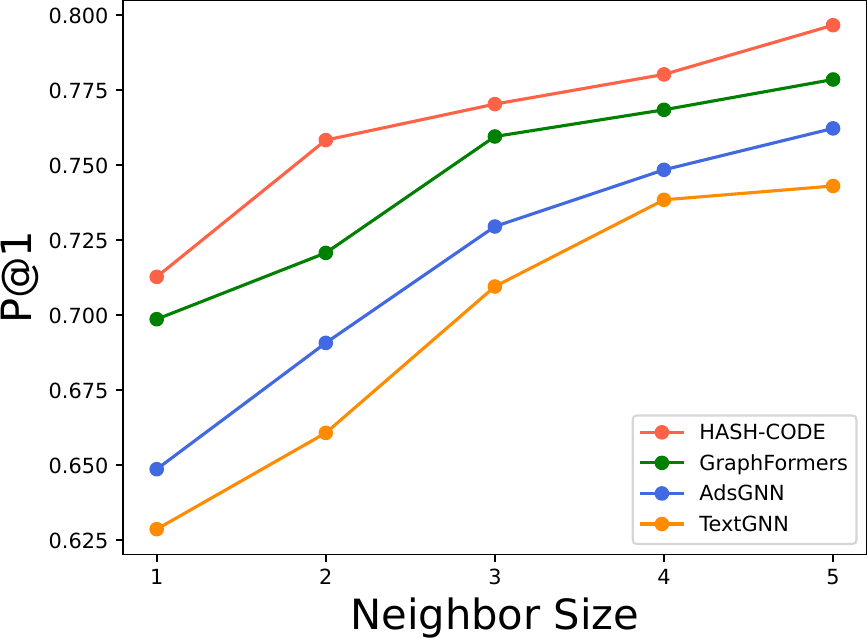}
    }
    \caption{Impact of neighbor size on DBLP and Product dataset. Enlarging the number of neighbour nodes brings performance improvement to both models. HASH-CODE maintains consistent advantages over GraphFormers over all test cases.}
    \label{fig:neibor}
    % \vspace{-0.3cm}
\end{figure}

% \begin{figure}[h]
% \centering
% \subfigure[Product]{
% \includegraphics[width=0.45\linewidth]{figures/Product_Size_P1.pdf}
% \label{figure:prodcut_size}
% }
% % \subfigure[Beauty]{
% % \includegraphics[width=0.3\textwidth]{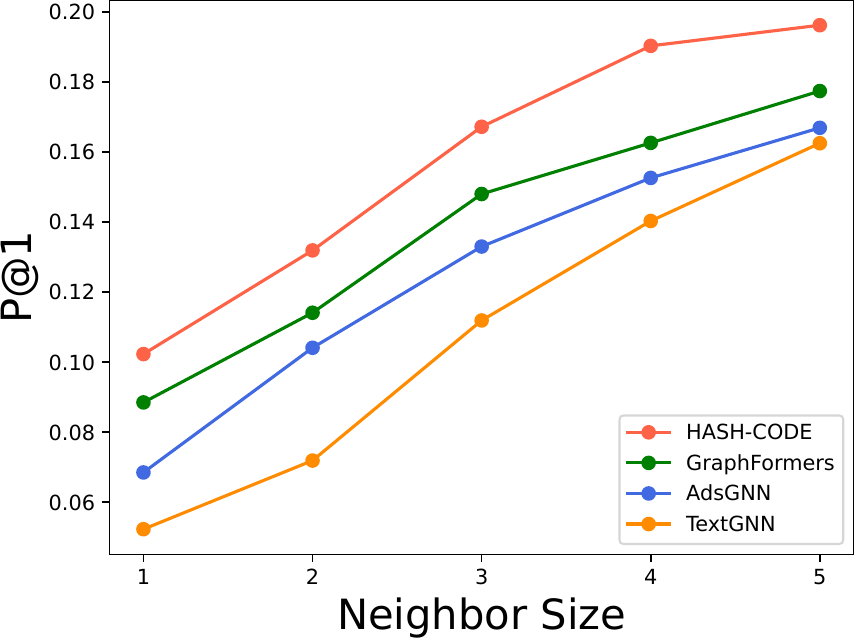}
% % \label{figure:beauty_size}
% % }
% % \subfigure[Sports]{
% % \includegraphics[width=0.3\textwidth]{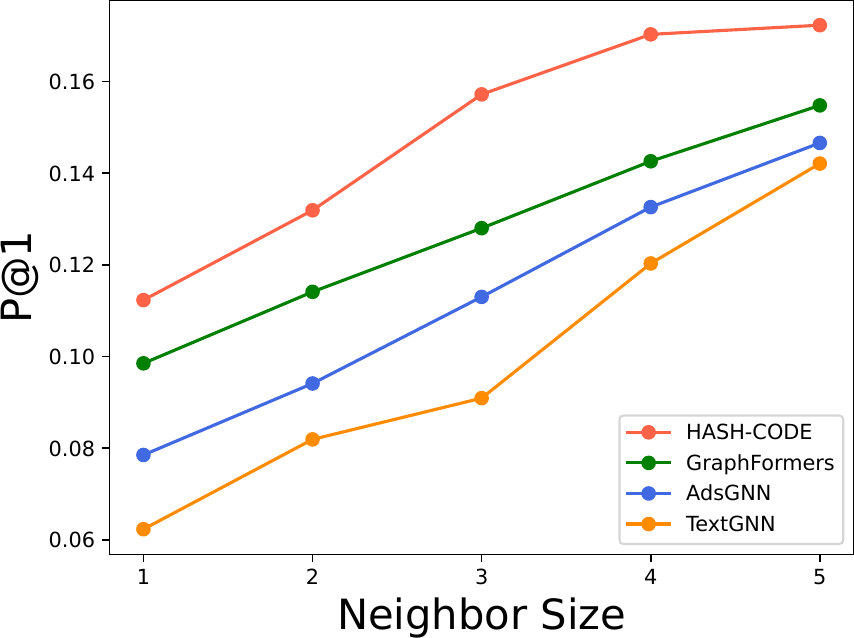}
% % \label{figure:sports_size}
% % }

% % \subfigure[Toys]{
% % \includegraphics[width=0.3\textwidth]{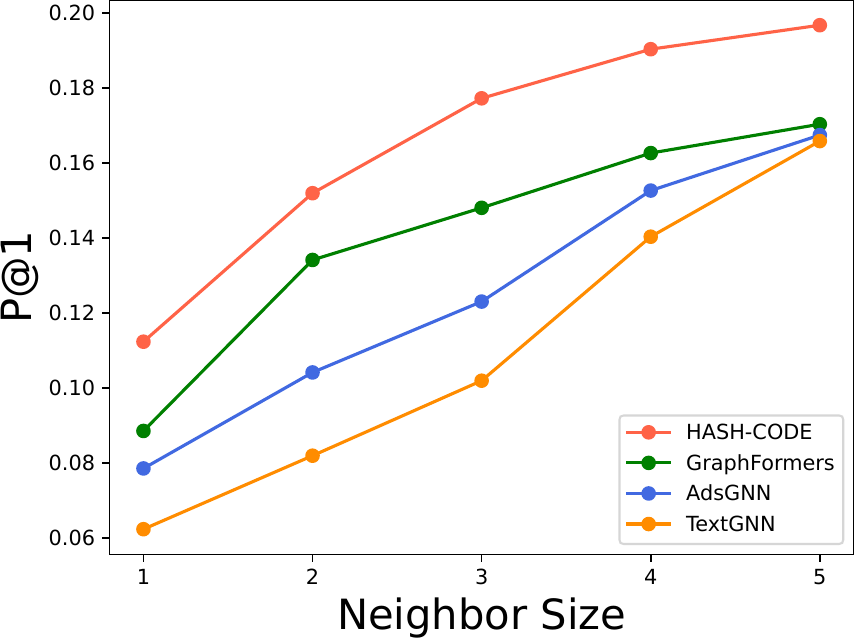}
% % \label{figure:toys_size}
% % }
% \subfigure[DBLP]{
% \includegraphics[width=0.45\linewidth]{figures/Neighbor_Size_P1.pdf}
% \label{figure:dblp_size}
% }
% % \subfigure[Wiki]{
% % \includegraphics[width=0.3\textwidth]{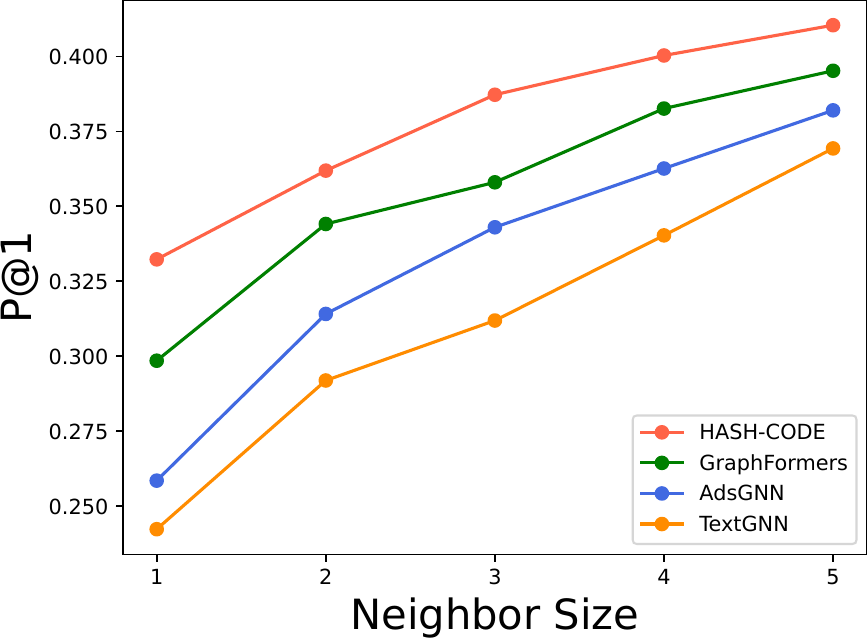}
% % \label{figure:wiki_size}
% % }
% \caption{Impact of neighbor size on DBLP dataset. Enlarging the number of neighbour nodes brings performance improvement to both models. HASH-CODE maintains consistent advantages over GraphFormers over all test cases.}
% \label{fig:neibor}
% \end{figure}

\subsection{Influence of Neighbor Size}
\label{sec:neighbor_size}
We analyze the impact of neighbourhood size with a fraction of neighbour nodes randomly sampled for each center node. From Figure~\ref{fig:neibor}, we can observe that with the increasing number of neighbour nodes, both HASH-CODE and Graphformers achieve higher prediction accuracies. However, the marginal gain is varnishing, as the relative improvement becomes smaller when more neighbours are included. In all the testing cases, HASH-CODE maintains consistent advantages over GraphFormers, which demonstrates the effectiveness of our proposed method.

% \subsection{HFC-aware Embedding Visualization.}
% \label{sec:visualization}
% To intuitively study the impact of our HFC-loss, we visualize the input node embeddings for different target classes by t-SNE ~\citep{van2008visualizing}. We conduct the visualization on DBLP with four different target classes, and each target class has more than 1000 node embeddings. Figure~\ref{fig:visulize} shows that compared with HFC-aware loss, the spectral contrastive loss cannot effectively distinguish different types of sample nodes. Especially in the central part of Figure~\ref{fig:nohfc}, sample points are almost completely overlapping. It is clear that the HFC-aware loss learns more discriminative node embeddings. 

% \begin{figure}[h]
% \centering
% \subfigure[HASH-CODE-NoHFC]{
% \begin{minipage}[t]{0.5\linewidth}
% \centering
% \includegraphics[width=\linewidth]{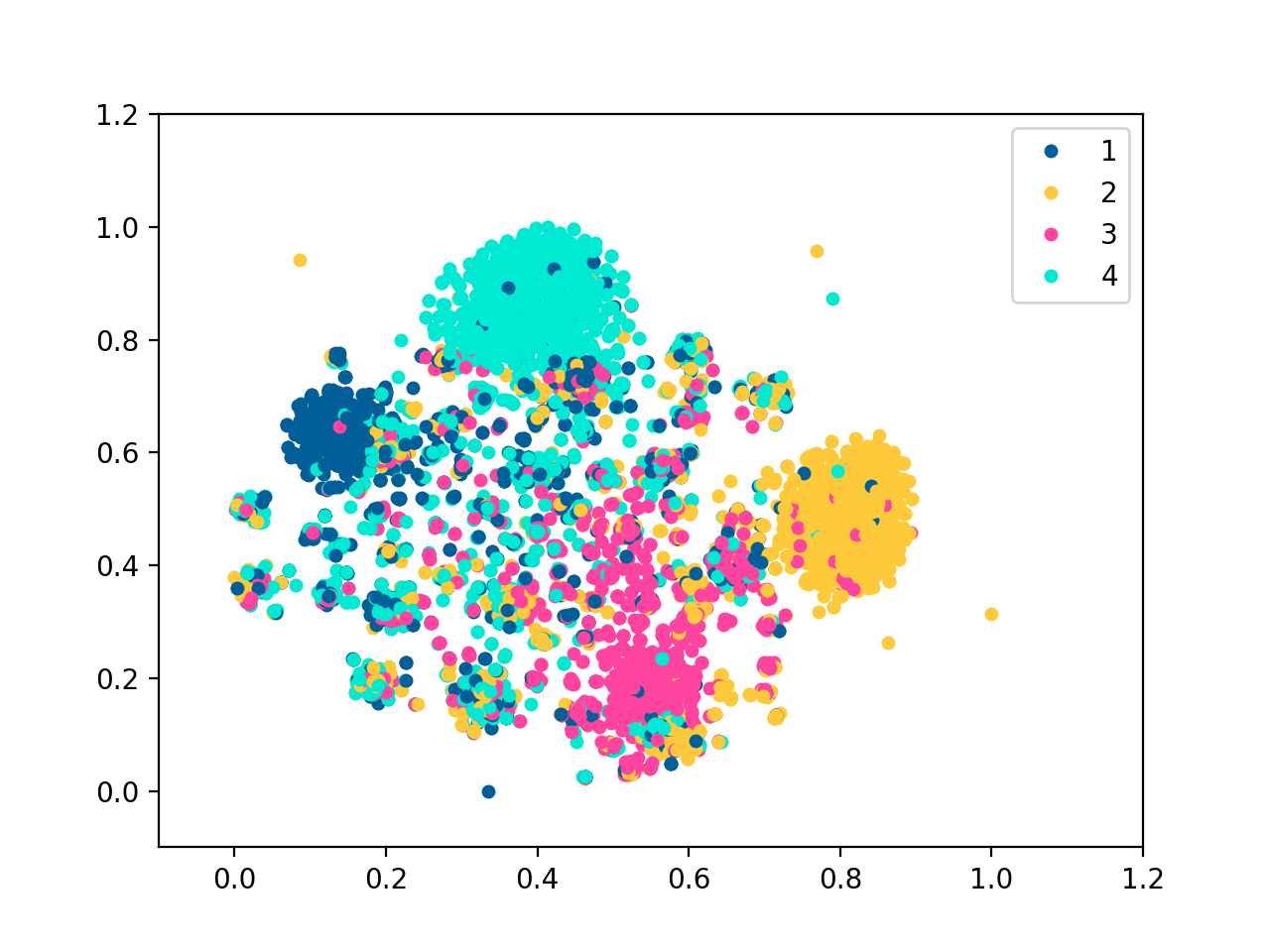}
% \label{fig:nohfc}
% %\caption{fig1}
% \end{minipage}%
% }%
% \subfigure[HASH-CODE-HFC]{
% \begin{minipage}[t]{0.5\linewidth}
% \centering
% \includegraphics[width=\linewidth]{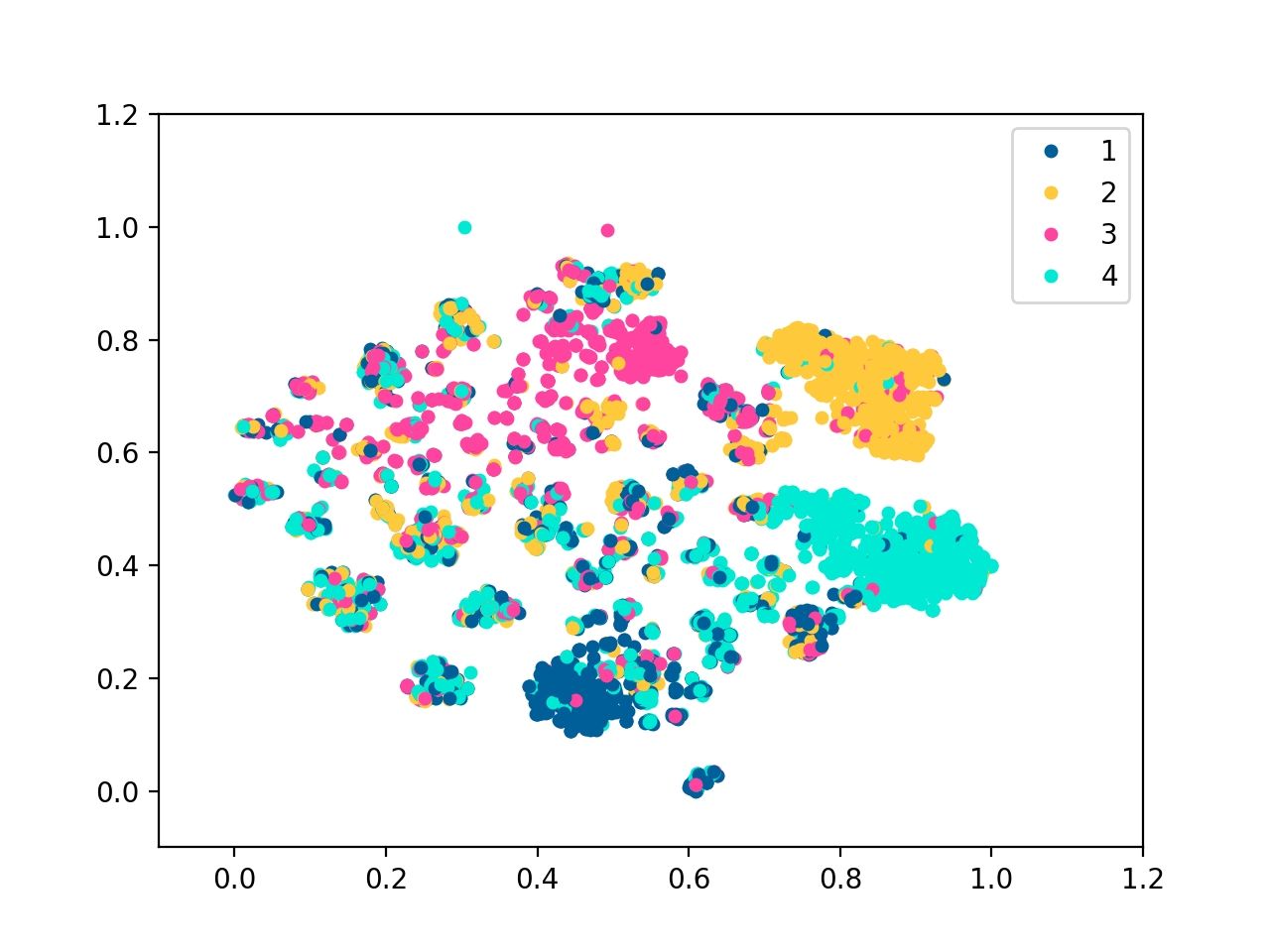}
% \label{fig:hfc}
% %\caption{fig1}
% \end{minipage}%
% }%               
% \centering
% \caption{Embedding visulization of input nodes belonging to different target classes. Points with the same color denote input nodes belonging to the same target class. HFC-aware loss learns more discriminative embeddings than spectral contrastive loss.}
% \label{fig:visulize}
% % \vspace{-0.3cm}
% \end{figure}

%% The file named.bst is a bibliography style file for BibTeX 0.99c

\end{document}